\documentclass[a4paper,UKenglish,cleveref,hyperref,autoref]{lipics-v2021}

\title{Independent Set Hardness in Graphs of Bounded Twin-Width and Low-Radius Merge-Width}

\titlerunning{Independent Set Hardness in Graphs of Bounded Twin-Width and Merge-Width}

\author{\'{E}douard Bonnet}{Univ Lyon, CNRS, ENS de Lyon, Université Claude Bernard Lyon 1, LIP UMR5668, France \and \url{http://perso.ens-lyon.fr/edouard.bonnet/}}{edouard.bonnet@ens-lyon.fr}{https://orcid.org/0000-0002-1653-5822}{}

\author{Ma\"el Dumas}
       {Institute of Informatics, University of Warsaw, Poland}
       {mdumas@mimuw.edu.pl}
       {https://orcid.org/0009-0004-8906-9801}
       {}

\author{Julien Duron}{Institute of Informatics, University of Warsaw, Poland}{j.duron@uw.edu.pl}{}{}

\authorrunning{\'E. Bonnet, M. Dumas, J. Duron}

\Copyright{Édouard Bonnet, Maël Dumas,  Julien Duron}%TODO mandatory, please use full first names. LIPIcs license is "CC-BY";  http://creativecommons.org/licenses/by/3.0/

\category{}%optional, e.g. invited paper

\relatedversion{}%optional, e.g. full version hosted on arXiv, HAL, or other respository/website
\supplement{}%optional, e.g. related research data, source code, ... hosted on a repository like zenodo, figshare, GitHub, ...

\funding{JD received funding from ERC grant BUKA (No.~101126229).
MD and JD were supported by the project BOBR that has received funding from ERC
under the European Union's Horizon 2020 research and innovation programme,
grant agreement No.~948057.}%optional, to capture a funding statement, which applies to all authors. Please enter author specific funding statements as fifth argument of the \author macro.
 
\acknowledgements{We want to thank Karolina Drabik, Jakub Nowakowski, Nikolas Mählmann, and Szymon Toruńczyk for fruitful discussions at an early stage of this project.}%optional

\nolinenumbers %uncomment to disable line numbering

\hideLIPIcs  %uncomment to remove references to LIPIcs series (logo, DOI, ...), e.g. when preparing a pre-final version to be uploaded to arXiv or another public repository

\EventEditors{John Q. Open and Joan R. Access}
\EventNoEds{2}
\EventLongTitle{42nd Conference on Very Important Topics (CVIT 2016)}
\EventShortTitle{CVIT 2016}
\EventAcronym{CVIT}
\EventYear{2016}
\EventDate{December 24--27, 2016}
\EventLocation{Little Whinging, United Kingdom}
\EventLogo{}
\SeriesVolume{42}
\ArticleNo{23}

\usepackage[utf8]{inputenc}  

\usepackage[T1]{fontenc}
\usepackage{lmodern}

\usepackage[colorinlistoftodos,bordercolor=orange,backgroundcolor=orange!20,linecolor=orange,textsize=normalsize]{todonotes}

\usepackage{amsmath}  
\usepackage{amssymb}     
\usepackage{bbm}
\usepackage{accents}
\usepackage{complexity}
\usepackage{booktabs}
\usepackage{fixmath}
\usepackage{paralist}
\usepackage{bm}

\makeatletter
\newtheorem*{rep@theorem}{\rep@title}
\newcommand{\newreptheorem}[2]{%
\newenvironment{rep#1}[1]{%
 \def\rep@title{#2 \ref{##1}}%
 \begin{rep@theorem}}%
 {\end{rep@theorem}}}
\makeatother

\newreptheorem{theorem}{Theorem}
\newreptheorem{lemma}{Lemma}

\crefname{observation}{Observation}{Observations}

\usepackage{xspace}
\usepackage{tikz}

\usepackage[ruled,vlined,linesnumbered]{algorithm2e}

\usetikzlibrary{fit}
\usetikzlibrary{arrows}
\usetikzlibrary{patterns}
\usetikzlibrary{calc}
\usetikzlibrary{shapes}
\usetikzlibrary{positioning}
\usetikzlibrary{math}
\usetikzlibrary{shapes.symbols}
\usetikzlibrary{decorations.pathreplacing}
\tikzset{draw half paths/.style 2 args={%
  decoration={show path construction,
    lineto code={
      \draw [#1] (\tikzinputsegmentfirst) -- 
         ($(\tikzinputsegmentfirst)!0.5!(\tikzinputsegmentlast)$);
      \draw [#2] ($(\tikzinputsegmentfirst)!0.5!(\tikzinputsegmentlast)$)
        -- (\tikzinputsegmentlast);
    }
  }, decorate
}}

\usepackage[scr=boondox,scrscaled=1.05]{mathalfa}

\renewcommand{\geq}{\geqslant}
\renewcommand{\leq}{\leqslant}

\renewcommand{\le}{\leq}
\renewcommand{\ge}{\geq}

\usepackage{fontawesome5}

\newcommand{\aiproof}{%
  \raisebox{-0.15ex}{\faRobot}\,\(\vdash\)%
}

\newcommand{\val}{\text{val}}

\newcommand{\tsat}{\textsc{3-SAT}\xspace}

\newcommand{\kds}{\textsc{$k$-Dominating Set}\xspace}

\newcommand{\kis}{\textsc{$k$-Independent Set}\xspace}

\newcommand{\mis}{\textsc{Max Independent Set}\xspace}

\newcommand{\smis}{\textsc{MIS}\xspace}

\newcommand{\gridtiling}{\textsc{Grid Tiling}\xspace}

\theoremstyle{definition}

\newcommand{\tww}{\text{tww}}
\newcommand{\mw}{\text{mw}}
\renewcommand{\P}{{\mathcal P}}

\newcommand{\dist}{\mathrm{dist}}

\newcommand{\Ball}{\mathrm{Ball}}

\newcommand{\Oh}{\mathcal O}

\renewcommand{\S}{\mathcal S}
\renewcommand{\C}{\mathcal C}

\newcommand{\N}{\mathbb N}

\begin{document}

\maketitle

\begin{abstract}
  For every $\varepsilon > 0$, \textsc{Max Independent Set} admits a~polynomial-time $n^\varepsilon$-approximation algorithm on \mbox{$n$-vertex} graphs of effectively bounded twin-width [Bergé et al., STACS '23].
  The approximation factor actually obtained is more precisely $n^{\Oh(1/ \log \log n)}$.
  Prior to the current paper, no approximation hardness was known for this problem, and the existence of a~polynomial-time approximation scheme (PTAS) was repeatedly raised as an open question.
  We answer this question in a~strong sense: We show that there is a~constant $\gamma > 0$ such that a~polynomial-time $n^{\gamma/ (\log \log n)^2}$-approximation algorithm for \textsc{Max Independent Set} on graphs of twin-width at most 4 would refute the Exponential-Time Hypothesis (ETH).
  This lower bound further holds if a~4-sequence is provided as part of the input.  
  We show the same hardness of approximation for \textsc{Min Coloring}, which also has a~nearly matching $n^{\Oh(1/ \log \log n)}$-approximation algorithm on graphs of effectively bounded twin-width.

  We also clarify the parameterized complexity of \textsc{$k$-Independent Set} on graphs of bounded radius-$r$ merge-width when the range of~$r$ is limited.
  There is a~fixed-parameter tractable algorithm for \textsc{$k$-Independent Set} on graphs given with radius-$2^{\Oh(k^2)}$ merge sequences of bounded width [Dreier and Toruńczyk, STOC '25].
  We complement this result by showing that \textsc{$k$-Independent Set} is W[1]-hard on graphs given with radius-$o(k)$ merge sequences of bounded width.
  We further show that this result also holds for \textsc{$k$-Dominating Set}.
\end{abstract}

\section{Introduction}\label{sec:intro}

The problem of finding in a~graph a~largest subset of pairwise non-adjacent vertices, \mis (\smis for short), is NP-complete~\cite{GJ79}, and very hard to approximate: For any $\varepsilon > 0$, an $n^{1-\varepsilon}$-approximation algorithm would imply that P\,$=$\,NP~\cite{Hastad96,Zuckerman07}.
The best known polynomial-time approximation factor is~$\frac{n (\log \log n)^2}{\log^3 n}$~\cite{Feige04}.
In contrast, this central problem enjoys much better approximation algorithms on structured graph classes.
A~historical example is that of planar graphs where, due to Baker's shifting technique, a~polynomial-time approximation scheme (PTAS) exists~\cite{Baker94}.
\smis also has a~PTAS on minor-free classes~\cite{Grohe03} and on disk graphs~\cite{ChanH12}.
If a~geometric representation is provided, there is, for every $\varepsilon > 0$, an $n^\varepsilon$-approximation algorithm for the independence number of the intersection graph of $n$ curves every two of which have at~most a~constant number of intersections~\cite{FoxP11}, and a~quasipolynomial-time approximation scheme (QPTAS) for general string graphs~\cite{AdamaszekHW19}.

In this paper, our focus is on classes of bounded twin-width~\cite{twin-width1} and of bounded merge-width~\cite{DreierT25}.
These graph parameters were introduced in 2020 and 2025, respectively.
Classes of bounded twin-width include classes of bounded clique-width, classes excluding a~fixed minor, unit interval graphs, and $d$-dimensional grids~\cite{twin-width1}.
Classes of bounded merge-width are even more general and further include classes of bounded expansion~\cite{DreierT25}.
The main algorithmic application of classes of bounded twin-width and of bounded merge-width, when bounded-width witnesses are given, is that first-order model checking is fixed-parameter tractable~\cite{twin-width1,DreierT25}.

As we detail in the next paragraph, some improved approximation algorithms exist on classes of bounded twin-width.
In general, however, there is no precise understanding of the provably best algorithms achievable when twin-width or merge-width is bounded.
This is reflected in large approximability gaps and the lack of tight lower bounds under the Exponential-Time Hypothesis (ETH).
Our motivation is to fill those gaps.

In classes of effectively\footnote{See this and all relevant definitions in~\cref{sec:prelim}.} bounded twin-width, $n^\varepsilon$-approximation algorithms for \smis, \textsc{Min Coloring}, and other related problems were shown for every $\varepsilon > 0$~\cite{BergeBDW23}.
The proof in~\cite{BergeBDW23} reveals that the approximation factor for \smis is more specifically $n^{c/\log \log n}$, for some constant~$c$.
While it was known that \mis is NP-hard on graphs of bounded twin-width \cite{twin-width1}, even bounded by~4~\cite{Berge22} (also see~\cite{Bonnet26} for the status of other graph problems on graphs of low twin-width), no conditional hardness of approximation was established on classes of bounded twin-width. 
Even a~polynomial-time approximation scheme (PTAS) was not excluded.
Indeed, the existence of a~PTAS was raised several times \cite{twin-width2,twin-width3,BergeBDW23,Depres24,Bonnet24}. 
It was nonetheless observed that any $\Oh(1)$-approximation algorithm would imply the existence of a~PTAS~\cite{twin-width3}.

We rule this out under the ETH and show a~much tighter hardness-of-approximation result that nearly matches the $n^{c/\log \log n}$-approximation algorithm.

\begin{theorem}\label{thm:inapprox-main}
  Unless the ETH fails, there is a~constant $\gamma > 0$ such that \mis does not admit a~polynomial-time $n^{\gamma/(\log \log n)^2}$-approximation algorithm on $n$-vertex graphs of twin-width at~most~4, even when a~4-sequence is given as part of the input. 
\end{theorem}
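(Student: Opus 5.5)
We would reduce from a problem that, under the ETH, already has an almost-polynomial inapproximability gap, and then apply a size-controlled, gap-preserving embedding into graphs of twin-width at most~4. The natural source is \mis on general graphs of small size $N$. Under the ETH, there are results of the form ``\mis on $N$-vertex graphs cannot be $N^{1-\varepsilon}$-approximated in time $2^{N^{c}}$''; these come from PCP-based hardness combined with almost-linear-size PCPs. Equivalently, we can start from a gap version of \tsat with $m$ clauses and apply sparsification plus a linear-size PCP, yielding a gap \textsc{Label Cover}/\mis instance $H$ of size $m\,\mathrm{polylog}\,m$ with a constant (or $N^{\delta}$) gap that cannot be decided in time $2^{o(m/\mathrm{polylog}\, m)}$.

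The key construction embeds $H$ into a graph $G$ of twin-width at most~4 with $n = 2^{\Theta(N \cdot \mathrm{polylog} N)}$ vertices, so that $\log n \approx N$. We take a fixed-twin-width ``subdivision-like'' gadget in the style of the NP-hardness of \mis on twin-width~4 graphs~\cite{Berge22}. Each edge of $H$ is replaced by a long odd path, and each vertex of $H$ by a long path or ladder. Long subdivisions are what make twin-width bounded: a $(\log N)$-subdivision of any graph has bounded twin-width. The constant~4 is met by using the specific sequence from~\cite{Berge22}, which we output explicitly, giving the 4-sequence.

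Subdividing each edge an even number $2\ell$ of times shifts $\alpha$ additively by exactly $\ell|E(H)|$, and this destroys a multiplicative gap. To preserve the gap, we first amplify it by \emph{graph powering}: we take the $t$-fold lexicographic or tensor product $H^t$, so that $\alpha(H^t)=\alpha(H)^t$ and a gap $g$ becomes $g^t$. We then make the additive shift negligible by weighting vertices, replacing each vertex by a large independent module. Modules (twins) do not increase twin-width, and a module of size $W \gg \ell|E|$ makes the subdivision term lower order.

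We choose $t$ so that the final size satisfies $n = |V(H)|^{t}\cdot\mathrm{poly}$ while the running time budget is still $2^{o(N)}$. With $\log n \approx t\log N$ and an amplified gap $N^{\Theta(t)}$-ish, taking $t \approx \log N/\log\log N$ and tracking the polylog losses from the PCP and from the subdivision length should yield the exponent $\gamma/(\log\log n)^2$. A polynomial-time $n^{\gamma/(\log\log n)^2}$-approximation on $G$ would then decide the gap instance in time $2^{o(N)}$, contradicting the ETH.

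The main obstacle is reconciling powering with bounded twin-width. Products of graphs do not have bounded twin-width after subdivision unless the subdivision length grows as $\log |V(H^t)|$, and that subdivision again costs additive terms. We would first try to encode the product implicitly: use the twin-width-4 gadget as a ``grid-tiling''-like structure in which a chosen vertex per coordinate is propagated along long paths, with verification gadgets placed only locally. This keeps the subdivision overhead polynomial in $n$ and the additive error controllable by the module weights. Checking that the contraction sequence stays within red degree~4 through all weighted modules and tiling rows is the delicate step.
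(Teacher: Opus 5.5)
\textbf{There is a genuine gap.} The step that fails is your remedy for the additive shift. In an even subdivision with $2\ell$ internal vertices per edge, an independent set containing both endpoints of an edge loses only \emph{one} path vertex on that edge.

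Suppose you subdivide and then blow up every original vertex into an independent module of size $W$. The independence number becomes $\max_{S\subseteq V(H)}\bigl(W|S|+\ell|E(H)|-|E(H[S])|\bigr)$. As soon as $W$ exceeds the maximum degree, this maximum is attained at $S=V(H)$ and equals $W|V(H)|+(\ell-1)|E(H)|$, which no longer depends on $\alpha(H)$. The edge constraints evaporate, and the gap is destroyed rather than preserved. Blowing up before subdividing is no better: it multiplies the number of subdivided paths, and hence the shift, by $W^2$.

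Your fallback, an implicit ``grid-tiling'' encoding of the product with local verification gadgets, is not specified enough to check twin-width~$4$, the size bound, or gap preservation. So as it stands the proposal does not give a proof. Note also that only the lexicographic power satisfies $\alpha(H^{[t]})=\alpha(H)^t$; for tensor powers, $\alpha$ is at least $\alpha(H)|V(H)|^{t-1}$.

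\textbf{The missing idea} is that the obstacle you identify disappears once you reverse the order of operations. Lexicographic powers preserve twin-width exactly: $\tww(G^{[k]})=\tww(G)$, and a $d$-sequence of $G^{[k]}$ is computable from one of $G$ (\cref{lem:tww-lex-pow}). This is the correct form of your remark that modules do not increase twin-width, since $G^{[k]}$ is obtained by substituting copies of $G^{[k-1]}$ for the vertices of $G$. So one subdivides the \emph{small} instance first and powers afterwards, and one \emph{accepts} the additive loss instead of trying to cancel it. The paper's route is as follows:
\begin{enumerate}
\item Take the constant-gap bounded-occurrence \tsat instances of \cref{thm:sat-inapprox}.
\item Build the sparse graph $J_\varphi$, with $3m$ vertices and $O(m)$ edges. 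Sparsity matters, because the dilution is proportional to the number of subdivided edges.
\item Apply a $2\lceil\log(3m)\rceil$-subdivision. This yields twin-width at most~$4$ with an explicit 4-sequence, and $\alpha=\val(\varphi)+\Theta(m\log m)$. The gap therefore drops to only $1+\Theta(1/\log m)$.
\item Take the $k$-th lexicographic power with $k=\lceil m/\log^{c+2}m\rceil$. The gap becomes $\bigl(1+\Theta(1/\log m)\bigr)^k=2^{\Theta(k/\log m)}$.
\end{enumerate}
In the last step, $\log n\approx k\log m$ and $\log\log n\approx\log m$, so $n^{\gamma/(\log\log n)^2}\approx 2^{\gamma k/\log m}$, which is below the gap for small $\gamma$. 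The construction takes $(m\log m)^{O(k)}=2^{o(N/\log^c N)}$ time. The two logarithmic losses, one from the subdivision length and one from $\log n=k\log|V(G_\varphi)|$, are exactly the $(\log\log n)^2$ in the exponent.

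Your parameter sketch is also inconsistent. The power must be nearly linear in the instance size to exhaust the ETH budget. By contrast, $t\approx\log N/\log\log N$ with $\log n\approx t\log N$ contradicts your own requirement $\log n\approx N$.
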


The reduction uses, in order, three ingredients: (1) a~classical reduction from \tsat to \smis, (2) a~long even subdivision of the resulting instances, and (3) a~lexicographic power of the resulting graphs.

``Long'' subdivisions have twin-width at~most~4~\cite{Berge22}, which explains the role of (2). 
\emph{Even} subdivisions preserve the NP-hardness of \smis~\cite{Poljak74}.
However, ``long'' subdivisions do not preserve any hardness of approximation.
There is a~straightforward PTAS on these graphs: select every other vertex along each subdivided edge.
So, there is a~psychological barrier in using (2).
Nevertheless, with the right exponent, (3) resurrects this seemingly defunct approximation gap, while keeping the reduction within the desired running-time budget.

Similar ingredients can be adapted to give the same result for \textsc{Min Coloring}.

\begin{theorem}[\aiproof{}]\label{thm:inapprox-col}
  Unless the ETH fails, there is a~constant $\gamma > 0$ such that \textsc{Min Coloring} does not admit a~polynomial-time $n^{\gamma/(\log \log n)^2}$-approximation algorithm on \mbox{$n$-vertex} graphs of twin-width at~most~4, even when a~4-sequence is given as part of the input. 
\end{theorem}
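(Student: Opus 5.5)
We will adapt the three-step reduction behind \cref{thm:inapprox-main} to colorings, replacing \smis by the clique cover number of the complement, or more directly by working with coloring throughout. The plan: (1) start from a \tsat instance with $N$ variables and $\Oh(N)$ clauses and produce, via a standard gap-free reduction, a graph $H$ on $\Oh(N)$ vertices for which deciding $\chi(H)\le 3$ is NP-hard with linear size blow-up (the classical \tsat to \textsc{3-Coloring} reduction). (2) Replace $H$ by a long subdivision $H'$ that still has twin-width at most~4 with an explicit 4-sequence~\cite{Berge22}. Since subdivisions destroy hardness of coloring as well (long subdivided graphs are 3-colorable), we must instead use a subdivision-like gadget that preserves the chromatic constraint, i.e.\ replace each edge $uv$ by a long path-like structure that forces ``$u$ and $v$ receive different colors among $\{1,2,3\}$'' while keeping twin-width at most~4. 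An option compatible with~\cite{Berge22}: encode coloring via independent sets. Use the known equivalence that $\chi(G)\le 3$ iff $G\,\square\,K_3$ (or $G$ times $K_3$ with appropriate structure) has an independent set of size $|V(G)|$, then apply the even-subdivision trick from the \smis proof to this graph. So concretely we let $H'$ be the long even subdivision of the graph produced in the \smis hardness, whose independence number is either exactly $t$ or at most $t-1$.

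(3) Amplify with a product. For \smis the lexicographic power $H'^{[k]}$ multiplies $\alpha$; for coloring we want a product under which $\chi$ behaves multiplicatively in terms of a fractional/independence quantity. The plan is to use the lexicographic power again and bound $\chi$ by $\alpha$: for $G=H'^{[k]}$ with $n'=|V(H')|$, one has $\chi(G)\ge |V(G)|/\alpha(G)=(n'/\alpha(H'))^k$, while $\chi(G)\le \chi_f$-type upper bounds come from $\chi(G)\le \chi(H')^k$. This gives a gap only if in the yes case $H'$ has a near-perfect fractional coloring by maximum independent sets. So we would instead pass to the complement-style relation: choose $H'$ vertex-transitive-ish in the yes case, or better, build $G$ as the lexicographic power of $H'$ and then use that $\chi(G^{[k]})\ge \chi_f(G)^k$ and $\chi(G^{[k]})\le \chi_f(G)^k(1+k\ln|V|)$-style bounds (lexicographic products satisfy $\chi_f(A[B])=\chi_f(A)\chi_f(B)$ and $\chi \le \chi_f\cdot \Oh(\log n)$ multiplicatively). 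Then a constant-factor gap in $\chi_f(H')$, say $c$ versus $c(1+\delta)$ with $\delta\approx 1/\ell$ where $\ell$ is the subdivision length, becomes $(1+\delta)^k$ versus polylog overhead, which is a large gap once $k\delta\gg\log\log$ of the size.

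Parameters then follow the \smis theorem: lexicographic products preserve twin-width bound~4 (twin-width of $A[B]$ is $\max(\tww(A),\tww(B))$, with sequences composable), the size is $n=n'^k$, and we need $n'^k=2^{o(N)}$ to contradict ETH under a polynomial algorithm, with $n'=\Oh(N\ell)$ and gap $(1+1/\ell)^k\approx 2^{k/\ell}$. Taking $\ell=\Theta(\log N)$-ish and $k=N/\log^2 N$ gives $\log n\approx N/\log N$, $\log\log n\approx \log N$, and gap $2^{N/\log^3 N}=n^{\gamma/(\log\log n)^2}$. The main obstacle is step (2)–(3) interface: establishing a constant-size fractional-chromatic gap on long even subdivisions (with $\delta=\Theta(1/\ell)$) and controlling the $\chi$ versus $\chi_f$ loss in the lexicographic power; I would first try to show directly that in the yes case $H'$ is covered by a small explicit family of maximum independent sets (alternating along each subdivided path, guided by the satisfying assignment), giving $\chi(H'^{[k]})$ close to $\chi(H')^k$ bounds tight enough.
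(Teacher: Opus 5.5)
Your proposal has two genuine gaps. The second is the obstacle you flag yourself, and for the construction you chose it is fatal.

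\textbf{The source instance needs a PCP gap.} With a gap-free reduction, the independence number of the subdivided graph is $t$ versus $t-1$ with $t=\Theta(N\ell)$. The relative gap is then $\delta=\Theta(1/(N\ell))$, not $\Theta(1/\ell)$, and $(1+\delta)^k=1+o(1)$ for every $k$ with $n'^k=2^{o(N)}$. Your $\delta\approx 1/\ell$ requires an additive gap of order $N$ in the base instance. The paper obtains it by composing Dinur's quasi-linear PCP with the linear-size reduction of Bogdanov et al. This yields bounded-degree graphs $H$ that are either 3-colorable or have no 3-colorable induced subgraph on $\beta|V(H)|$ vertices (\cref{thm:inapprox-max-3-col}). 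That problem only has a $2^{\Omega(N/\log^c N)}$ lower bound, so one must take $k=\lceil N/\log^{c+2}N\rceil$ rather than $N/\log^2 N$.

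\textbf{Plain even subdivisions cannot carry a coloring gap.} Let $N:=|V(H)|$ and take $J:=H\,\square\,K_3$. This is the right graph: $\alpha(J)$ is the order of a largest 3-colorable induced subgraph of $H$, and it is exactly the paper's $J$. Subdividing $J_\varphi$ instead runs into the same odd cycles.

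In the even $s$-subdivision $H'$ of $J$, every triangle of $J$ becomes an odd cycle of length $3(s+1)$. Hence even for yes-instances $\chi_f(H')\ge 2+\frac{2}{3s+2}$. But $|V(H')|/\alpha(H')=2+\frac{2}{2+3s+3s|E(H)|/N}$ is strictly smaller, so the uniform cover by maximum independent sets you hope to exhibit does not exist.

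Worse, the only no-case bound the source gap gives you is $|V(H')|/(\beta N+s|E(J)|/2)$. This is at most $2+\frac{2}{3s+2}$ whenever $|E(H)|\ge 2(1-\beta)\frac{s+1}{s}N$. Every no-instance satisfies this: it has $|E(H)|>3(1-\beta)N$, because repeatedly deleting a vertex of a 4-vertex-critical subgraph (each such vertex has degree at least 3) leaves a 3-colorable induced subgraph on at least $N-|E(H)|/3$ vertices. So your bounds never separate the two cases, and the $\chi_f$/Lovász detour cannot rescue this construction.

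The missing idea is a gadget for which, in yes-instances, $\chi$ equals $|V|/\alpha$ exactly. The paper subdivides each edge of $H\,\square\,K_3$ with $s=2\lceil\log(3N)\rceil$ vertices and adds a true twin to every odd-indexed one, so each subdivided edge becomes $s/2$ vertex-disjoint triangles. Set $q:=(3N+3|E(H)|)\lceil\log(3N)\rceil$. Then:
\begin{itemize}
\item yes-instances give $\chi(G)\le 3$ with $|V(G)|=3N+3q$;
\item no-instances give $\alpha(G)\le\beta N+q$;
\item twin-width stays at most~4, since contracting the twins first creates no red edge.
\end{itemize}
The lexicographic power then directly gives $\chi(G^{[k]})\le 3^k$ versus $\chi(G^{[k]})\ge(|V(G)|/\alpha(G))^k\ge 3^k\bigl(1+\frac{(1-\beta)N}{\beta N+q}\bigr)^k$, with no fractional chromatic number needed.
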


\Cref{thm:inapprox-col} implies \cref{thm:inapprox-main}.
Indeed, a~polynomial-time \mbox{$r$-approximation} algorithm for \smis implies a~polynomial-time $\Oh(r \log n)$-approximation algorithm for \textsc{Min Coloring}.
Repeatedly removing an $r$-approximate maximum independent set and assigning it a~new color yields an $r(1+\ln n)$-approximate minimum coloring, by the classic analysis of \textsc{Set Cover}'s greedy algorithm.
Moreover, $n^{\Theta(1)/(\log \log n)^2} \cdot \Oh(\log n) = n^{\Theta(1)/(\log \log n)^2}$.

We keep both proofs because the proof of~\cref{thm:inapprox-main} implements the above outline without the extra technicalities needed for~\textsc{Min Coloring}.

\medskip

We then delineate the tractability boundary of \kis (\textsc{$k$-IS} for short) on graphs of bounded merge-width.
Strictly speaking, merge-width is a~\emph{family} of parameters.
For every positive integer $r$, the radius-$r$ merge-width, denoted by~$\mw_r$, must be bounded by a~function of~$r$, for the merge-width itself to be bounded.
Low-radius merge-width, think $\mw_1$, $\mw_2$, or $\mw_c$ for some constant~$c$, is of independent interest.
There are some properties that only require these parameters to be bounded.
Notably, classes of bounded merge-width were proven $\chi$-bounded by showing the stronger statement that classes of bounded~$\mw_2$ are $\chi$-bounded~\cite{Bonamy25} (and the $\chi$-boundedness of classes of bounded $\mw_1$ remains open).

Given radius-$2^{c k^2}$ merge sequences of bounded width\footnote{witnessing that the radius-$2^{c k^2}$ merge-width is bounded; see~\cref{subsec:merge-width} for the relevant definitions.} for some appropriate constant $c>0$, \textsc{$k$-IS} admits a~fixed-parameter tractable (FPT) algorithm~\cite{DreierT25}.
In fact, this is more generally the case for first-order model checking.
Dreier and Toruńczyk ask whether a~simpler and faster algorithm could exist for \kis (and \kds).

A~related question is whether these problems admit FPT algorithms given radius-$r$ merge sequences of bounded width already in the regime $r=\Oh(1)$.
We rule this out for any radius $r = o(k)$, with a~reduction from the W[1]-hard \gridtiling problem~\cite{Marx07} to~\textsc{$k$-IS}.
We ensure that the produced graphs have bounded radius-$r$ merge-width by spacing the \gridtiling cells out with ``paths of co-matchings'' of length $\Theta(r)$.

\begin{theorem}\label{thm:main-paramIS}
  $\forall r=o(k)$, \kis is W[1]-hard in graphs of bounded $\mw_r$.
\end{theorem}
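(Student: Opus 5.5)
We reduce from \gridtiling, which is W[1]-hard parameterized by $k$~\cite{Marx07}. An instance consists of a $k\times k$ grid of cells, each cell $(i,j)$ carrying a set $S_{i,j}\subseteq [n]\times[n]$. We must pick one pair per cell so that horizontally adjacent cells agree on the first coordinate and vertically adjacent cells agree on the second. Our target is a graph $G$ with parameter $k'=\Oh(k^2 r)$, where $r$ is the merge radius we want to keep bounded. Since $r=o(k)$ we have $k=\omega(r)$, so it suffices to produce, for every $r$, graphs whose $\mw_r$ is bounded by an absolute constant (independent of $r$, $k$, $n$) together with the corresponding merge sequences, and whose parameter is $k'=f(k)\cdot \Oh(r)$. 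Then, as $r=o(k)$, the radius is sublinear in the new parameter (up to the polynomial blow-up from $k$ to $k^2$). To be safe, I would set $k'=\Theta(k^2 r)$ and note that $r=o(k)$ forces $r=o(\sqrt{k'})$ and hence $o(k')$, which is the regime the statement needs.

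\textbf{Construction.} Each cell $(i,j)$ becomes a clique $C_{i,j}$ on $S_{i,j}$, so an independent set picks at most one vertex per cell. Neighbouring cells are not linked directly. Instead they are joined by a chain of $L=\Theta(r)$ (say $L=2r+2$) intermediate cliques $P_1,\dots,P_L$, each a copy of $[n]$, where consecutive cliques are joined by a \emph{co-matching}: $a\in P_t$ is adjacent to $b\in P_{t+1}$ iff $a\neq b$. The ends attach similarly, with $(x,y)\in C_{i,j}$ adjacent to all $a\in P_1\ne x$ (horizontal chain) or $\ne y$ (vertical chain). An independent set containing one vertex from every clique along the chain must then carry the same value throughout, which transmits the coordinate constraint. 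We set $k'=k^2+2k(k-1)L$. Correctness is then routine: a solution of size $k'$ hits every clique exactly once, and the co-matchings force equality along every chain. The converse direction is a direct construction.

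\textbf{Bounded $\mw_r$.} This is the main obstacle. Co-matchings are dense but "simple" (complements of matchings), and a chain of them has diameter issues only at short range. The intent is that any radius-$r$ ball around a partially merged part sees only a stretch of fewer than $L$ co-matching links, and a bounded number of such stretches. The plan is as follows.
\begin{enumerate}
\item Build the merge sequence by merging, for each value $a\in[n]$, the copies of $a$ across all path cliques, together with the cell vertices, in a carefully ordered sweep over $a=1,\dots,n$.
\item Resolve the pairs inside each clique $P_t$ positively as soon as possible, and resolve the non-matching pairs between $P_t$ and $P_{t+1}$ positively as soon as both endpoints are in merged parts.
\item Show that the only unresolved edges at any time form a matching-like structure of bounded degree.
\end{enumerate}
Radius-$r$ reachability then goes only through unresolved edges, so the length $L>2r$ ensures that the unresolved graph locally looks like disjoint paths. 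Balls of radius $r$ would meet boundedly many parts, because cells are at distance $>r$ from each other in the unresolved graph.

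I expect the delicate part to be defining the resolution order so that the cell cliques (arbitrary subsets $S_{i,j}$, which are not co-matchings) do not create large radius-$r$ neighbourhoods. I would first try handling each cell's connection to its four chains by a single resolved complete-minus-matching pattern. If that fails, I would insert an extra co-matching "buffer" of length $\Theta(r)$ encoding $S_{i,j}$ via a pair-indexed clique. Finally, the case of $k$-\textsc{Dominating Set} is left for later.
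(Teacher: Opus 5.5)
\textbf{The merge-width part has a genuine gap.} The reduction itself is sound: an independent set of size $k'$ takes one vertex per clique, and the co-matchings and end attachments carry the coordinate along each chain. The problem is that your graph has unbounded $\mw_r$ for every $r\ge 2$, whatever merge sequence you choose. The cause is that the cell clique $C_{i,j}$ is attached at distance one both to the first block $P_1$ of a horizontal chain (through the first coordinate) and to the first block $P'_1$ of a vertical chain (through the second coordinate).

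On $C_{i,j}\cup P_1\cup P'_1$, flip the three cliques and the two attachment relations. The result is exactly the $1$-subdivision of the bipartite graph $S_{i,j}$, for instance of $K_{n,n}$ when $S_{i,j}=[n]^2$. A flip along a $3$-part partition changes $\mw_r$ by at most a factor $3$: refine every $\P_t$ by that partition and append a final step. Passing to an induced subgraph does not increase $\mw_r$.

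Now take the $1$-subdivision of $K_{n,n}$, with branch vertices $a_i,b_j$ and subdivision vertices $s_{ij}$; it has $\mw_2=\Omega(\sqrt n)$. Let $t$ be the first step at which some branch vertex, say $a_i$, shares a part of $\P_{t+1}$ with another vertex $v$. For all but at most one $j$, homogeneity forces $a_is_{ij}$ or $vs_{ij}$ to be resolved. So some $x\in\{a_i,v\}$ is $R_{t+1}$-adjacent to a set $Z$ of at least $(n-1)/2$ vertices $s_{ij}$. If the radius-$2$ width is $w$, some part $Y\in\P_t$ contains at least $(n-1)/(2w)$ of them. If this number is at least $2$, then each corresponding $b_j$, which is still a singleton of $\P_t$, must have a resolved pair into $Y\cap Z$. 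Hence the radius-$2$ ball of $x$ meets at least $(n-1)/(2w)$ parts, giving $w=\Omega(\sqrt n)$. So no resolution order can rescue your main construction. The pair-indexed co-matching buffer of length $\Theta(r)$, which you only list as a fallback, is the indispensable idea: it stretches this $1$-subdivision into a long subdivision.

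\textbf{Your merge-sequence plan also misreads the definition.} Radius-$r$ balls are taken in the graph $(V,R_{t+1})$ of \emph{resolved} pairs. So ``reachability goes only through unresolved edges'' is backwards, and resolving clique and co-matching edges early is precisely what makes balls large. For example, if the value classes of a chain are parts, two classes $a\neq b$ see each other through the pattern $|s-t|\le 1$. Then all those edges or all those non-edges must be resolved, and a radius-$1$ ball meets all $n$ classes.

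A width independent of $r$ is also out of reach with co-matching chains of length $\Theta(r)$ and large $n$: propagating along the chain from the first merge across two blocks forces width of order $r$. Fortunately it is not needed, since the paper proves hardness on graphs with $\mw_{f(k)}\le g(k)$ for any computable $f=o(x)$ and $g(x)\ge 4x$.

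\textbf{How the paper does it.} Each cell gadget is a cross of $2r-1$ cliques, each a copy of $S_\pi$, joined by co-matchings. Adjacent border blocks are joined directly by ``different first (resp.\ second) coordinate''. The merge sequence has four steps:
\begin{enumerate}
\item resolve the intra-cell matching non-edges;
\item merge the interior blocks and the coordinate layers of the border blocks;
\item resolve the inter-cell non-edges and merge each block;
\item merge everything.
\end{enumerate}
Its radius-$(r-1)$ width is at most $4r-3$, because two layers of one border block are at distance at least $2r$ among resolved pairs.

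\textbf{Parameter bookkeeping.} Your bookkeeping runs the wrong way. Given $f$, you must choose $r$ as a function of the \gridtiling parameter $q$: the least odd $r\ge 3$ with $f((2r-1)q^2)\le r-1$. In general this makes $r$ much larger than $q$, not $o(q)$.
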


We then adapt the previous reduction to get the same statement for \kds.

\begin{theorem}\label{thm:main-paramDS}
 $\forall r=o(k)$, \kds is W[1]-hard in graphs of bounded $\mw_r$.
\end{theorem}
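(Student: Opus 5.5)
We reduce from \kis itself, reusing the construction behind \cref{thm:main-paramIS}: it is a standard fact that \kds is W[1]-hard from \gridtiling, but we prefer to reuse the co-matching-path gadgets so that the radius-$r$ merge-width bound carries over almost verbatim. Starting from a \gridtiling instance with $k\times k$ cells over $[n]\times[n]$, we build one vertex set $V_{i,j}$ per cell, one vertex per allowed pair. Horizontal and vertical consistency are enforced by chains of co-matchings of length $\Theta(r)$ between adjacent cells, exactly as in the \kis reduction, with $r=o(k)$ controlling the number of intermediate layers. To convert ``pick exactly one vertex per set, consistent along chains'' into domination, we apply the usual \kds gadgetry. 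We make every layer of every chain a clique, so that one vertex dominates its layer. We attach to each layer two private pendant guard vertices adjacent to the whole layer, which forces a solution of the prescribed budget $k'$ (the total number of layers) to pick exactly one vertex per layer, and no guard. Finally, between two consecutive layers $L,L'$ with intended matching $\sigma$, we add for each index $a$ a checker vertex $c_a$ adjacent to $L\setminus\{a\}$ and to $\sigma(a)$ in $L'$. The checker $c_a$ is undominated exactly when $a$ is picked in $L$ and $\sigma(a)$ is not picked in $L'$, so domination enforces consistency along the chain.

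Correctness then follows in the standard two directions. A \gridtiling solution yields one vertex per layer, namely the value propagated along each chain, and these dominate every layer clique, every guard, and every checker. Conversely, a dominating set of size $k'$ is forced by the guards to hit each layer exactly once, and the checkers force the choices to propagate consistently, which yields a grid tiling. The parameter is $k'=\Theta(k^2 r)$, and since $r=o(k)$ we still have $r=o(k')$. We get this by choosing $r$ in terms of $k$ with room to spare: if $r(k')=o(k')$, we pick the chain length $\ell$ so that $\ell\le r(k')/c$ for the resulting $k'$, which is possible because $k'$ grows faster than $r(k')$.

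The main obstacle is bounding $\mw_r$. In the \kis construction, consecutive layers are joined by co-matchings, and the merge sequence exploits the fact that within radius $r$ only a bounded number of ``flips'' of bounded-size structure are visible. Here the layers are cliques and the checker adjacencies are co-matching-like as well: $c_a$ sees $L\setminus\{a\}$, which is a co-matching between the checkers and $L$, plus a matching to $L'$. So up to complementation of the layer cliques and of the checker-to-$L$ bipartite graphs, the graph looks like the \kis graph with bounded-degree decorations. I would first try to replay the merge sequence of \cref{thm:main-paramIS}, merging each checker set and each guard pair together with its layer in the same step. Then I would verify that resolved edges and radius-$r$ balls in the construction graph still have bounded size. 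If that fails, the fallback is to show that the operations used, namely adding pendants, substituting cliques into parts, and adding a co-matching layer, increase $\mw_r$ by a bounded amount, possibly at the cost of shifting $r$ by a constant factor. Such a shift is harmless because $r=o(k)$ is preserved under constant rescaling.
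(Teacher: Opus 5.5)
Your construction is essentially the paper's: a reduction from \gridtiling in which each selection block carries two private guards, so that a budget equal to the number of selection blocks forces exactly one pick per block. Consecutive selection blocks are linked by a checker block that is a co-matching to the previous block and a matching to the next. The merge sequence resolves first the in-cell (co-)matchings, then the cross-cell ones, then everything. Your clique layers are in fact needed. With edgeless selection blocks, as in the paper's description, an unselected vertex of a selection block has only checker vertices and its two guards as neighbours, so $\text{Dom}_\pi$ does not dominate it.

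Two smaller points. First, the cross-cell constraint is not a matching. You need value-indexed checkers like the paper's $r(j,\pi)$ and $d(j,\pi)$: non-adjacent exactly to the $j$-class of one border block, and adjacent exactly to the $j$-class of the facing one. Second, these checkers should act on border copies used in a single direction. If both the horizontal and the vertical checks landed on the shared set $V_{i,j}$, that set would have to be refined by both coordinates, i.e.\ into singletons, before the cross-cell pairs could be resolved.

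The genuine gap is the bound on $\mw_r$. This is the heart of the statement, and you leave it as a plan plus a fallback; the closure properties the fallback invokes are not established facts, and they are not needed. The direct merge sequence works, but only under a condition that your parameter paragraph states backwards.

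In the second phase, border blocks are refined into coordinate classes, while each cross-cell checker is resolved to a whole coordinate class. In the paper's notation, start from $b(\tau,h_1,\pi)$ and follow the resolved thread of $\tau$ to the bottom border block. Pass through $d(\tau_2,\pi)$, where $\tau_2$ is the second component of $\tau$, to every tile of that block with second component $\tau_2$ and arbitrary first component. Then return along their threads to the left border block. This walk has length about twice that of a cell path. Its midpoint $d(\tau_2,\pi)$ is at distance about one cell path from up to $n$ distinct first-component classes of the left border block.

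Hence the width stays bounded only if the radius is at most about the chain length $\ell$; this is why the paper uses radius $2r-2$ for paths of $2r-1$ blocks. You need $\ell \geq C\cdot r(k')$, not $\ell\le r(k')/c$. With chains short compared to the radius, this merge sequence has width as large as $n$.

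The right choice is the one at the end of the proof of Theorem~\ref{thm:main-paramIS}. Fix the target radius function $f(x)=o(x)$ and the \gridtiling parameter $q$. The new parameter is $k'=\Theta(\ell q^2)$, which is linear in $\ell$, so there is a least computable $\ell=\ell(q)$ with $f(k')\le c\,\ell$. Relatedly, the hypothesis $r=o(k)$ concerns the \kds parameter, not the \gridtiling one. So ``since $r=o(k)$ we still have $r=o(k')$'' is not the bookkeeping you need.
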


\textbf{Future work.}
Our work naturally leads to the following questions.
Combining \cref{thm:inapprox-main,thm:inapprox-col} and~\cite{BergeBDW23}, we now know that, unless the ETH fails, the best approximation factor for \mis and \textsc{Min Coloring} lies somewhere between $n^{\Theta(1)/(\log \log n)^2}$ and $n^{\Theta(1)/\log \log n}$.
Can one obtain matching bounds, up to constants in the exponent?

Ruling out a~PTAS for \smis on graphs of bounded twin-width under the weaker assumption that P\,$\neq$\,NP remains open.
Let us also mention the question of obtaining nontrivial approximation algorithms for these problems when the contraction sequence is \emph{not} provided.

\cref{thm:main-paramIS,thm:main-paramDS} and \cite{DreierT25} raise the question of the smallest radius $r(k)$ such that \mbox{\kis} (and \textsc{$k$-Dominating Set}) is FPT on graphs given with a~radius-$r(k)$ merge sequence of bounded width.
We now know that this radius is between $\Omega(k)$ and $2^{\Oh(k^2)}$.

\medskip

\textbf{AI disclosure.}
After we proved \cref{thm:inapprox-main}, we suspected that the same result should hold for \textsc{Min Coloring}, and could probably be proven in a~similar fashion.
We delegated this adaptation to ChatGPT Pro 5.5 Extended Thinking.
We provided it with the proof of~\cref{thm:inapprox-main} with a~concise prompt asking for a~possible transfer to \textsc{Min Coloring}.
Its first response, produced after 17 minutes, pinpointed the places where the same proof would break.
We then clarified that the task was the adaptation.
After another 50 minutes, it provided a~complete proof of~\cref{thm:inapprox-col}.
Although it follows the proof of~\cref{thm:inapprox-main} at a~high level, this proof contains several new elements, notably the initial hardness result~\cite[Thm.\ 14]{Bogdanov02}, a~modified subdivision, and a~reduction to leverage the chromatic number lower bound that comes from an upper bound on the independence number.  
The write-up of this result, and of the paper as a~whole, is ours.

\section{Preliminaries}\label{sec:prelim}

For two integers $i$ and $j$, we denote by $[i,j]$ the set of integers that are at least $i$ and at most~$j$.
For every integer $i$, $[i]$ is a shorthand for $[1,i]$.
In this paper, the $\log$ function is in base 2.

The Exponential-Time Hypothesis (ETH) asserts that there is a~real constant $\lambda > 1$ such that $n$-variable \tsat cannot be solved in time $\Oh(\lambda^n)$.  

\subsection{Standard graph-theoretic definitions and notation}

We denote by $V(G)$ and $E(G)$ the set of vertices and edges of a graph $G$, respectively.
We denote by $\lVert G \rVert$ the total size of~$G$, that is, $|V(G)|+|E(G)|$.
For $S \subseteq V(G)$, the \emph{subgraph of $G$ induced by $S$}, denoted $G[S]$, is obtained by removing from $G$ all the vertices that are not in $S$.
%Then $G-S$ is a shorthand for $G[V(G)\setminus S]$.
We denote by $N_G(v)$ the set of neighbors of~$v$ in~$G$.
We denote by $\dist_G(u,v)$ the length of a shortest path between~$u$ and~$v$ in $G$, and set $\Ball^r_G(v) := \{w\in V(G) \mid \dist_G(v,w)\le r\}$.

A~\emph{subdivision} of a~graph $G$ is any graph $H$ obtained from $G$ by replacing edges $e$ of~$G$ by paths with at~least~one edge whose extremities are the endpoints of~$e$.
A~\emph{$(\geqslant s)$-subdivision} is a~subdivision where every edge is replaced by a~path with at~least~$s$ internal vertices. 

\subsection{Trigraphs, contraction sequences, and twin-width}

A~\emph{trigraph} $G$ has vertex set $V(G)$, black edge set $E(G)$, red edge set $R(G)$ such that $E(G) \cap R(G) = \emptyset$ (and $E(G),R(G) \subseteq \binom{V(G)}{2}$).
Two vertices $u, v$ such that $uv \in R(G)$ are called \emph{red neighbors}.
The \emph{red degree} of~$u$ is its number of red neighbors.
The \emph{maximum red degree} of~$G$ is the maximum red degree among all its vertices.

For distinct vertices $u,v \in V(G)$, the \emph{contraction} of $u$ and $v$ replaces them with a new vertex $w$.
For every $x \in V(G) \setminus \{u,v\}$, the pair $wx$ is a black edge if both $ux$ and $vx$ are black edges, it is a non-edge if both $ux$ and $vx$ are non-edges, and it is a red edge otherwise.
All pairs not incident with $u$ or $v$ remain unchanged.

A \emph{contraction sequence} of a graph $G$, viewed as a trigraph with
no red edges, is a sequence $G = G_n, G_{n-1}, \ldots, G_1$ such that $|V(G_i)| = i$ and $G_{i-1}$ is obtained from $G_i$ by contracting two vertices.
The sequence is a~\emph{$d$-sequence} if, for every $i \in [n]$, the trigraph $G_i$ has maximum red degree at~most~$d$.
The \emph{twin-width} of~$G$, denoted by $\tww(G)$, is the minimum integer $d$ for which $G$ admits a~$d$-sequence.

To show \cref{thm:inapprox-main,thm:inapprox-col}, we will rely on the fact that long-enough subdivisions of general graphs have bounded twin-width.

\begin{lemma}[Theorem 7 in~\cite{Berge22}]\label{lem:tww-subd}
  For every $n$-vertex graph $G$, the $2 \lceil \log n \rceil$-subdivision $H$ of~$G$ has twin-width at~most~4.
  Furthermore, a~4-sequence of~$H$ can be computed in $\Oh(\lVert H \rVert)$ time.
\end{lemma}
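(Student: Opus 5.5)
We prove \cref{lem:tww-subd} by giving an explicit contraction sequence. Let $t=\lceil \log n\rceil$ and number the vertices of $G$ as $v_0,\dots,v_{n-1}$, identifying each $v_i$ with the binary string of length $t$ of $i$. In $H$, each edge $v_iv_j$ of $G$ (with $i<j$) is replaced by a path of $2t$ internal vertices. We split it into a half of length $t$ attached to $v_i$ and a half of length $t$ attached to $v_j$. The idea is to \emph{zip} the vertices of $G$ together along a binary tree of depth $t$. At round $\ell\in[t]$, we merge the current parts that agree on all bits except the $\ell$-th one, and at the same time we absorb one more layer of subdivision vertices on every half-path. The length $2t$ is exactly what gives one fresh layer per round from each side.

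The invariant after round $\ell$ is as follows. Each current \emph{vertex-part} is the set of original vertices sharing a prefix of length $t-\ell$, together with, for each incident half-path, its first $\ell$ subdivision vertices contracted into it. Concretely, we first contract, on each half-path, the first $\ell$ internal vertices into a single vertex of that path. Then we merge the parts of pairs of siblings, merging the corresponding path-heads only when they lead to the same neighbouring part. The red edges at any time come from three sources:
\begin{itemize}
\item the red edge from a contracted path-segment to the next uncontracted path vertex;
\item the red edges between sibling structures being merged;
\item the red edges between two contracted segments coming from opposite ends of the same edge, once the two halves meet.
\end{itemize}
Arranging the order of contractions inside a round carefully, we check that every vertex sees at most two red edges along its path and at most two more coming from the current merge. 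This gives the bound $4$. In the final phase all remaining parts are paths or trees of contracted segments of red degree at most $2$, and these collapse trivially into a single vertex.

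The main obstacle is the vertex-parts. Naively merging $v_i$ and $v_j$ creates a red edge to every neighbour of either of them, so the red degree becomes unbounded. The remedy is to never merge two original vertices directly. Instead, we first contract each original vertex together with the first subdivision vertex of \emph{all} its incident paths into the path vertices, so that the black edges are replaced by a star structure. Only afterwards do we merge path-heads that point towards the same target part. Because the targets are themselves merged in lockstep by binary prefix, two heads pointing to the same target part are exactly the ones that merge. This keeps the number of distinct red neighbours bounded by the tree structure, namely a parent part, a sibling, and two path neighbours.

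I expect the bookkeeping of this synchronisation to be the hard part. The goal is to show that at every intermediate trigraph, each contracted path-head has red edges only to its part, to its next path vertex, and to at most two heads at the opposite end. The linear running time follows because each vertex of $H$ takes part in $O(1)$ contractions per round in which it is touched, and each subdivision vertex is touched in only $O(1)$ rounds.
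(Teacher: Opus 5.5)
Your sketch has a genuine gap, exactly at the obstacle you flag. The paper does not reprove this lemma; it imports Theorem~7 of Berg\'e, Bonnet, and D\'epr\'es, so your argument has to stand on its own.

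Every edge of $G$ becomes a path with at least two internal vertices, so no two original vertices have a common neighbour in $H$. Hence a part containing two original vertices has \emph{only} red incident edges. So at the moment two original vertices are merged, the parts adjacent to them must number at most four.

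Your invariant already breaks this after round~1. The vertex-part containing the siblings $v,v'$ is adjacent to the next vertices (or heads) of all their half-paths. You merge heads only when they lead to the same neighbouring part, and at that point such a part is just a pair of vertices. For $G=K_n$ this leaves about $n/2$ red neighbours. No ordering of contractions inside the round can repair a bad end-of-round trigraph.

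Your stated remedy only pushes the problem one layer down. A part made of $v$ and the first subdivision vertex of every incident path is red-adjacent to the second vertex of every such path. If instead the first $\ell \geq 2$ vertices of each half-path form a separate head, every head is red-adjacent to $v$, and $v$ gets red degree $\deg(v)$.

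The missing idea is a preliminary phase that compresses every neighbourhood before any two original vertices meet. This, not synchronisation with the merging rounds, is what the $\lceil \log n \rceil$ vertices on each half are for.

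For each $v$, turn its half-paths into a binary trie rooted at $v$. The $j$-th vertices of the half-paths towards neighbours whose labels share their first $j$ bits form one part. The grouping is therefore coarse near $v$ and fine near the middle of the edges, the reverse of your grouping by target part.

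The trie can be built with red degree at most~4 by inserting one half-path at a time and zipping it down the current trie until it diverges. The node being enlarged sees only its parent, at most two children, and the next path vertex. Afterwards $v$ has only black edges to at most two parts, and every trie node has red degree at most~3.

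Only then do your $t$ rounds work. In round~$\ell$, first contract, in every trie, the two sibling leaves whose target classes are about to merge into their common parent. Then merge the tries of each pair of sibling classes bottom-up, leaves first and roots last. A merged node sees two parents plus either at most two children or at most two middle partners, which gives the bound~4. Each round consumes exactly one trie level on each side of every edge.
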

The algorithmic statement of~\cref{lem:tww-subd} is implicit in~\cite{Berge22} and follows from the proof.

We say that a~graph class $\mathcal C$ has \emph{effectively bounded twin-width} if there is a~polynomial-time algorithm $\mathcal A$ and a~fixed integer~$d$ such that on every graph $G \in \mathcal C$, $\mathcal A$ outputs a~$d$-sequence of~$G$.
We observe that most classes currently known to have bounded twin-width have effectively bounded twin-width. 

\subsection{Merge-width}\label{subsec:merge-width}

For two sets $X,Y$, we write $XY := \{xy \mid x\in X, y\in Y, x\neq y\}$, where $xy$ denotes the unordered pair $\{x,y\}$.
Let~$\P$ be a~partition of the vertices $V$ of a~graph~$G$.
For a~set $X \subseteq V$, we denote by $X/\P$ the set of parts of $\P$ intersecting~$X$. 
Let $R \subseteq \binom{V}{2}$ be a set of pairs of vertices of~$G$.
The distance between $x$ and $y$ in $(V,R)$ is denoted by~$\dist_R(x,y)$, and we define $\Ball_R^r(x) := \{y \in V \mid \dist_R(x,y) \leqslant r\}$.
The \emph{radius-$r$ width} of $(\P,R)$~is defined as
\[ \max_{v \in V} \left|\Ball^r_R(v)/\P\right|. \]

We say that~$\P$ is \emph{homogeneous modulo $R$} (in $G$) if for any parts $X,Y \in \P$ (possibly $X=Y$)
and pairs $xy,x'y'\in XY\setminus R$, we have that $xy \in E(G)$ if and only if $x'y' \in E(G)$.
A~partition $\P$ is \emph{coarser} than a~partition $\P'$, denoted by $\P' \preccurlyeq \P$, if every part of $\P'$ is contained in a~part of~$\P$.

\begin{definition}[Definition 3.1 in~\cite{DreierT25}]
A \emph{merge sequence} for a graph~$G$ is a~sequence $(\P_1,R_1),\dots,(\P_m,R_m)$ such that:
\begin{enumerate}
  \item $\P_1\preccurlyeq \P_2\preccurlyeq\ldots\preccurlyeq \P_m$ is a~sequence of ever coarser partitions of $V(G)$, with~$\P_1$ the partition into singletons and $\P_m$ the partition with one part, 
  \item $R_1 \subseteq \dots \subseteq R_m \subseteq \binom{V(G)}{2}$ is a monotone sequence of sets of pairs of vertices, and
  \item $\P_t$ is homogeneous modulo $R_t$, for $t=1,\ldots,m$.
\end{enumerate}
The \emph{radius-$r$ width} of this merge sequence is the maximum radius-$r$ width of $(\P_{t},R_{t+1})$,
for  $1 \le t < m$.
Finally, the \emph{radius-$r$ merge-width} of~$G$, denoted by~$\mw_r(G)$, is the minimum radius-$r$ width of a merge sequence for~$G$.
\end{definition}
The elements of $R_t$ are called \emph{resolved pairs} (at step~$t$).
Note that the partition $\P_{t+1}$ is obtained by merging some parts of~$\P_t$.
Note also that the index mismatch in $(\P_t,R_{t+1})$ is intentional, and prevents one from merging many parts and adding many resolved pairs all at once when going from step~$t$ to~$t+1$.

A graph class~$\C$ has \emph{bounded merge-width} if there is a function $f$ such that for every $G\in \C$ and every positive integer~$r$, we have $\mw_r(G) \leqslant f(r)$. 

%Graph classes of bounded twin-width have bounded merge-width~\cite{DreierT25}. 
%This can be easily seen, by considering a contraction sequence as a maximal sequence of partition $\P_1\prec\P_2\prec\ldots\prec \P_n$, 
%and for each partition $\P_i$, the corresponding set of resolved pair $R_i$ is the union of all $XY$ for pairs of non-homogeneous parts $X$ and $Y$ of $\P_i$ (which would imply a red edge in the corresponding trigraph).
%The radius-$r$ width of the resulting merge sequence is at most $\Oh(d^r)$ where $d$ is the twin-width of the graph.

\medskip

\textbf{Organization.}
We prove \cref{thm:inapprox-main} in \cref{sec:inapprox}, \cref{thm:inapprox-col} in \cref{sec:inapproxCOL}, \cref{thm:main-paramIS} in \cref{sec:w1-hard}, and  \cref{thm:main-paramDS} in \cref{sec:w1-hardDS}.

\section{Approximation Hardness of MIS in Graphs of Bounded Twin-Width}\label{sec:inapprox}

In this paper, all \tsat instances have only 3-clauses.
Equal literals within the same clause are allowed, but we disallow opposite literals of the same variable within the same clause.
A~\tsat-$B$ formula has no variable occurring (positively or negatively) more than $B$ times. 
The following theorem is a~consequence of Dinur's almost-linear PCP~\cite{Dinur07} and some folklore reductions; see, for instance,~\cite[Thm.\ 7]{Manurangsi17}.

\begin{theorem}[consequence of~Theorem 8.3 in~\cite{Dinur07}]\label{thm:sat-inapprox}
  Unless the ETH fails, there are positive constants $\beta < 1, B, c$ such that distinguishing $n$-variable $m$-clause \tsat-$B$ satisfiable formulas from instances where at~most $\beta m$ clauses can be satisfied requires $2^{\Omega(n/\log^c n)}$ time.    
\end{theorem}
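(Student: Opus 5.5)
The plan is to chain three standard steps: sparsification, Dinur's quasi-linear PCP, and degree reduction by expanders. Throughout, the size of the instance must grow only by a polylogarithmic factor, so that a $2^{o(N/\log^c N)}$-time gap algorithm yields a $2^{o(n)}$-time \tsat algorithm.

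\textbf{Step 1: sparsification.} Start from the ETH. By the Sparsification Lemma of Impagliazzo, Paturi and Zane, the ETH implies that \tsat on $n$ variables and $m=\Oh(n)$ clauses cannot be solved in time $2^{o(n)}$. The instance size is therefore $\Theta(n)$.

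\textbf{Step 2: Dinur's PCP.} Apply Theorem~8.3 of~\cite{Dinur07}. It maps a (sparse) \tsat instance of size $n$, in polynomial time, to a constraint system over a constant-size alphabet of size $N = n \cdot \log^{c_0} n$ for some constant $c_0$. Satisfiable instances map to satisfiable ones, and unsatisfiable ones map to instances in which at most a $(1-\varepsilon)$ fraction of constraints can be satisfied. Each constraint has constant arity and alphabet, so it can be written as a constant number of 3-clauses, using auxiliary variables where needed. This yields a gap-\tsat instance of size $\Oh(N)$ with soundness $1-\varepsilon'$. Clauses with fewer than three distinct literals are padded by repeating a literal, which the convention allows. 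Tautological clauses cannot arise, and any that do can be discarded.

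\textbf{Step 3: bounded occurrences.} Apply the Papadimitriou--Yannakakis expander trick. Replace each variable $x$ occurring $d_x$ times by copies $x_1,\dots,x_{d_x}$, one per occurrence. Add equality constraints $x_i \Rightarrow x_j$ and $x_j \Rightarrow x_i$ along the edges of a constant-degree expander on these copies, each written as a padded 3-clause. Every variable now occurs at most $B$ times for a constant $B$, and the size grows only by a constant factor. Completeness is immediate: set all copies of $x$ equal.

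\textbf{Main obstacle: soundness of Step 3.} Given any assignment to the new formula, reset all copies of each $x$ to their majority value. By expansion, if $k$ copies of $x$ disagree with the majority, then at least $\Omega(k)$ equality clauses between copies of $x$ are violated. Resetting those $k$ copies can newly falsify at most $k$ original clauses, so it never decreases the number of satisfied clauses by more than a constant factor of what it fixes. The resulting consistent assignment then inherits the $(1-\varepsilon')$ bound from the original formula. Since the total number of clauses is a constant multiple of the original count, this gives soundness $\beta<1$.

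\textbf{Conclusion.} The final instance has $n' = \Theta(N) = \Theta(n \log^{c_0} n)$ variables. Suppose some algorithm distinguished the two cases in time $2^{o(n'/\log^{c} n')}$ with $c = c_0$. Composing it with the polynomial-time reduction would decide the original sparse \tsat instance in time $2^{o(n)}$, contradicting Step~1. Hence the required time is $2^{\Omega(n'/\log^{c} n')}$, as claimed.
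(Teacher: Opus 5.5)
Your proposal is correct and follows the route the paper has in mind. The paper gives no proof of its own; it derives the statement from Dinur's quasi-linear PCP (Theorem~8.3 of~\cite{Dinur07}) plus exactly the folklore steps you spell out: sparsification, encoding the constant-alphabet constraint graph as \tsat, and expander-based occurrence reduction (cf.~\cite[Thm.\ 7]{Manurangsi17}).
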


We denote by $e(\varphi)$ the total number of unordered pairs of opposing literals appearing in~$\varphi$.
For instance, if $\varphi = x_1 \lor \neg x_2 \lor x_4, \neg x_1 \lor \neg x_1 \lor x_2, \neg x_2 \lor x_3 \lor \neg x_4$, we have $e(\varphi)=2+2+0+1=5$.
We denote by $\val(\varphi)$ the largest number of clauses satisfied by any truth assignment.
In particular, if $\varphi$ is satisfiable, then $\val(\varphi)$ is equal to its number of clauses.

\begin{theorem}\label{thm:log-n-subdivision}
  There is an $\Oh((m+e(\varphi)) \log m)$-time algorithm that, given an $m$-clause \tsat formula $\varphi$, outputs a~graph $G_\varphi$ of twin-width at~most~4 such that
  \begin{compactitem}
    \item $|V(G_\varphi)|=3m+(e(\varphi)+3m) \cdot 2 \lceil \log(3m) \rceil$, and 
    \item $\alpha(G_\varphi) = \val(\varphi) + (e(\varphi)+3m) \cdot \lceil \log(3m) \rceil$.
  \end{compactitem}
  Furthermore, a~4-sequence of $G_\varphi$ can be computed in $\Oh(\lVert G_\varphi \rVert)$ time.
\end{theorem}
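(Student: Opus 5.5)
We start from the classical reduction from \tsat to \smis. Given $\varphi$ with $m$ clauses, build a graph $G'$ with one vertex per literal occurrence, so $3m$ vertices. Each clause gets a triangle on its three occurrences, giving $3m$ edges; two equal literals in one clause are still distinct occurrences and still form a triangle edge. We also add an edge between every pair of occurrences of opposite literals, giving $e(\varphi)$ edges. Since opposite literals never share a clause, these two edge sets are disjoint, so $|E(G')| = 3m + e(\varphi)$.

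The standard argument gives $\alpha(G') = \val(\varphi)$. An assignment satisfying $\val(\varphi)$ clauses yields an independent set by picking one true occurrence per satisfied clause. Conversely, an independent set has at most one vertex per triangle and contains no pair of opposite literals, so setting its literals to true gives a consistent partial assignment satisfying at least as many clauses as the set has vertices.

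Next we subdivide every edge of $G'$ with exactly $2\lceil \log(3m) \rceil$ internal vertices. Since $G'$ has $3m$ vertices, this is exactly the subdivision of \cref{lem:tww-subd}, so $G_\varphi$ has twin-width at most $4$ and a 4-sequence is computable in $\Oh(\lVert G_\varphi \rVert)$ time. The vertex count is immediate: $3m + (e(\varphi)+3m)\cdot 2\lceil \log(3m) \rceil$. The running time follows because $\lVert G_\varphi \rVert = \Oh((m+e(\varphi))\log m)$ and $G'$ is built in $\Oh(m+e(\varphi))$ time by grouping occurrences per variable.

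The main step is the classical fact that replacing an edge by a path with $2\ell$ internal vertices raises $\alpha$ by exactly $\ell$ (Poljak), which we apply edge by edge.
\begin{itemize}
\item For the lower bound, take a maximum independent set $I$ of $G'$. On each path $u,p_1,\dots,p_{2\ell},v$, at most one of $u,v$ is in $I$. If $u\in I$ we pick $p_2,p_4,\dots,p_{2\ell}$; otherwise we pick $p_1,p_3,\dots,p_{2\ell-1}$. Each choice gives $\ell$ internal vertices and stays independent.
\item For the upper bound, take an independent set $J$ of $G_\varphi$. Each path's internal vertices induce a path on $2\ell$ vertices, so $J$ contains at most $\ell$ of them. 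If it contains exactly $\ell$ and both endpoints $u,v$ are in $J$, this is impossible: an independent set of size $\ell$ in $p_1\dots p_{2\ell}$ avoiding both $p_1$ and $p_{2\ell}$ lives in a path on $2\ell-2$ vertices, which has independence number $\ell-1$.
\item Hence, for every edge $uv$ of $G'$ with both endpoints in $J$, its path contributes at most $\ell-1$. So $|J| \le |J\cap V(G')| + \ell|E(G')| - |E(G'[J\cap V(G')])|$.
\item We remove one endpoint of each edge of $G'[J\cap V(G')]$ to get an independent set of $G'$ of size at least $|J\cap V(G')| - |E(G'[J\cap V(G')])|$. This gives $|J| \le \alpha(G') + \ell|E(G')|$.
\end{itemize}
With $\ell = \lceil\log(3m)\rceil$ and $|E(G')| = e(\varphi)+3m$, we get $\alpha(G_\varphi) = \val(\varphi) + (e(\varphi)+3m)\lceil \log(3m) \rceil$.

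The main obstacle is this upper-bound charging argument, specifically the case where both endpoints of a path are in $J$; the parity bound on the path with $2\ell-2$ vertices handles it. Everything else is bookkeeping.
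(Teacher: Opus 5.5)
Your proposal is correct and follows the paper's proof. It uses the same literal-occurrence graph $J_\varphi$ with $3m$ vertices and $3m+e(\varphi)$ edges, the same $2\lceil \log(3m)\rceil$-subdivision handled by \cref{lem:tww-subd}, and Poljak's argument for $\alpha$. Your counting upper bound (charging one lost path vertex to each edge of $G'[J\cap V(G')]$, then deleting an endpoint of each such edge) is a fully written-out version of the exchange argument the paper only sketches, namely that some maximum independent set contains at most one endpoint of each original edge.
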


\begin{proof}
  Let $J_\varphi$ be the graph with one vertex per literal occurrence in a~clause of~$\varphi$, and an edge between every two literals that are either in the same clause or that are of the same variable with opposite signs.
  In particular, $J_\varphi$ has exactly $3m$ vertices and exactly $3m+e(\varphi)$ edges (recall our definition of \tsat).

  Then, $G_\varphi$ is the $s$-subdivision of $J_\varphi$ with $s := 2 \lceil \log(3m) \rceil$.
  Note that $G_\varphi$ can be built from $\varphi$ in $\Oh((m+e(\varphi)) \log m)$ time.
  By~\cref{lem:tww-subd}, $G_\varphi$ has twin-width at~most~4, and a~4-sequence of $G_\varphi$ can be computed in $\Oh(\lVert G_\varphi \rVert)$ time.
  Since $G_\varphi$ is a~$s$-subdivision of a~$3m$-vertex $(3m+e(\varphi))$-edge graph, $|V(G_\varphi)|=3m+(e(\varphi)+3m) \cdot s$.

  The second item dates back to~Poljak~\cite{Poljak74}.
  We sketch this classic fact.

  The inequality $\alpha(G_\varphi) \geqslant \val(\varphi) + (e(\varphi)+3m) \cdot \frac{s}{2}$ holds, as there is an independent set of~$G_\varphi$ with one vertex corresponding to a~satisfied literal for each clause satisfied by a~fixed truth assignment that satisfies $\val(\varphi)$ clauses of~$\varphi$, and $\frac{s}{2}$ vertices per subdivided edge of~$J_\varphi$.

  The inequality $\alpha(G_\varphi) \leqslant \val(\varphi) + (e(\varphi)+3m) \cdot \frac{s}{2}$ holds since a~larger independent set of~$G_\varphi$ would contain more than $\val(\varphi)$ original vertices (those of~$J_\varphi$).
  This would imply that~$J_\varphi$ has an independent set of size greater than~$\val(\varphi)$.
  (We elaborate on this in the proof of~\cref{thm:log-n-triangle-subdivision}.)
  The (consistent, partial) truth assignment setting the corresponding literals to true would contradict the definition of~$\val(\varphi)$. 
\end{proof}

Given a~graph $G$ and a~positive integer~$k$, we denote by $G^{[k]}$ the \emph{$k$-th lexicographic power} of~$G$ with $V(G^{[k]})=V(G)^k$ and $(u_1, \ldots, u_k)(v_1, \ldots, v_k) \in E(G^{[k]})$ whenever there is some $i \in [k]$ such that $u_iv_i \in E(G)$ and for every $h \in [i-1]$, $u_h = v_h$.
The following is a~folklore observation.
We give a~recent reference for completeness.

\begin{lemma}[see Lemma 7.12 in~\cite{twin-width3}]\label{obs:alpha-lex-pow}
  For every graph $G$ and integer $k \geqslant 1$, $\alpha(G^{[k]})=\alpha(G)^k$.
  
  Furthermore, there is an algorithm that, given an independent set $S$ of $G^{[k]}$, outputs an independent set of~$G$ of size at~least $|S|^{1/k}$ in $\Oh(\lVert G^{[k]} \rVert)$ time.
\end{lemma}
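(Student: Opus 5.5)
We prove the two parts separately, the lower bound and the extraction being by induction on~$k$ through the decomposition $G^{[k]} \cong G[G^{[k-1]}]$. Here $G[H]$ denotes the lexicographic product: vertices $(u,x)$ with $u \in V(G)$, $x \in V(H)$, and $(u,x)(v,y)$ an edge iff $uv \in E(G)$, or $u=v$ and $xy \in E(H)$. This isomorphism follows directly from the definition of $G^{[k]}$ by splitting off the first coordinate.

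\textbf{Lower bound.} If $I$ is a maximum independent set of~$G$, then $I^k$ is independent in $G^{[k]}$. Indeed, for distinct tuples in $I^k$, let $i$ be the first coordinate where they differ. Then $u_iv_i \notin E(G)$ because $u_i,v_i \in I$, and no earlier coordinate can produce an edge because the earlier coordinates are equal. Hence $\alpha(G^{[k]}) \geqslant \alpha(G)^k$.

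\textbf{Upper bound and algorithm.} Both follow from one structural claim. Let $S$ be independent in $G[H]$. Let $A := \{u \in V(G) \mid S \cap (\{u\} \times V(H)) \neq \emptyset\}$, and for $u \in A$ let $S_u := \{x \mid (u,x) \in S\}$. Then $A$ is independent in~$G$, since any edge $uv$ with $u,v \in A$ would make two elements of $S$ adjacent. Each $S_u$ is independent in~$H$. Since $|S| = \sum_{u \in A} |S_u| \leqslant |A| \cdot \max_u |S_u|$, we get $\alpha(G[H]) \leqslant \alpha(G)\alpha(H)$, and induction gives $\alpha(G^{[k]}) \leqslant \alpha(G)^k$.

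For the algorithm, we show by induction on $k$ that from $S$ we can extract either $A$ with $|A| \geqslant |S|^{1/k}$, or, for some $u$, a set $S_u$ that is independent in $G^{[k-1]}$ with $|S_u| \geqslant |S|^{(k-1)/k}$. This holds because $|A| \cdot \max_u |S_u| \geqslant |S|$, so one of the two factors exceeds the required power. In the second case we recurse on $S_u$, which by induction yields an independent set of~$G$ of size at least $(|S|^{(k-1)/k})^{1/(k-1)} = |S|^{1/k}$.

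\textbf{Running time.} This is the only step needing care. Computing $A$ and all the $S_u$ takes one pass over $S$, which is $\Oh(|S| \cdot k)$ time since tuples have length~$k$. The recursion then only touches sub-tuples of elements of~$S$, so the total work is $\Oh(k^2 |S|)$, or $\Oh(k|S|)$ if we bucket by prefix with a trie. Since $|V(G^{[k]})| = |V(G)|^k$, this is well within $\Oh(\lVert G^{[k]} \rVert)$. Trivial cases such as $\alpha(G)=1$ or an empty $G$ need only a brief remark.
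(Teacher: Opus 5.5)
Your proof is correct and is essentially the folklore argument behind the cited Lemma 7.12, which the paper invokes without reproving. You view $G^{[k]}$ as $G$ with each vertex replaced by a copy of $G^{[k-1]}$, note that the projection $A$ of $S$ is independent in $G$ and each fiber $S_u$ is independent in $G^{[k-1]}$, and use $|S|\leqslant |A|\cdot\max_u|S_u|$ both for the upper bound and for the recursion. The one loose spot is the running time: the extra factor $k$ in your $\Oh(k|S|)$ bound is not absorbed by $|V(G^{[k]})|=|V(G)|^k$ as you state, but by $|E(G^{[k]})|\geqslant |V(G)|^{2k-2}$ when $G$ has an edge, while for edgeless $G$ you can simply output $V(G)$.
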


The twin-width of every lexicographic power of a~graph $G$ is equal to that of~$G$.

\begin{lemma}[Lemma 7.11 in~\cite{twin-width3}]\label{lem:tww-lex-pow}
  For every graph $G$ and integer $k \geqslant 1$, $\tww(G^{[k]})=\tww(G)$.

  Furthermore, given a~$d$-sequence of~$G$, one can compute a~$d$-sequence of~$G^{[k]}$ in time $\Oh(|V(G^{[k]})|)$.
\end{lemma}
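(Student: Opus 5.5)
We prove $\tww(G^{[k]}) \le \tww(G)$ by induction on $k$, writing $G^{[k]}$ as a substitution: $G^{[k]}$ is obtained from $G$ by replacing each vertex $u$ with a copy of $G^{[k-1]}$, the vertices $(u,x)$ and $(v,y)$ being adjacent iff $uv \in E(G)$, or $u=v$ and $xy \in E(G^{[k-1]})$. The reverse inequality $\tww(G^{[k]}) \ge \tww(G)$ is immediate, since $G$ is an induced subgraph of $G^{[k]}$ (fix all coordinates but the first) and twin-width is monotone under induced subgraphs. So it suffices to build a $d$-sequence of $G^{[k]}$ from a $d$-sequence of~$G$.

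\textbf{Construction.} We proceed in two phases. In the first phase, we contract each module $M_u = \{u\} \times V(G)^{k-1}$ to a single vertex, one module after the other. Within $M_u$ we follow a $d$-sequence of $G^{[k-1]}$, which exists by induction. This is safe for red degrees. Every vertex outside $M_u$ is adjacent either to all of $M_u$ or to none of it, so no red edge ever leaves $M_u$, and inside $M_u$ the red degree stays at most $d$ by the inductive hypothesis. Modules $M_v$ that are already contracted are single vertices with black or absent edges only, because contracting a module produces no red edges toward the rest of the graph. Hence, once $M_u$ is finished, the trigraph on the remaining vertices is exactly the trigraph we had before, with $M_u$ replaced by $u$. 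After all modules are processed, the current trigraph is $G$ itself with no red edges. In the second phase, we apply the given $d$-sequence of~$G$.

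\textbf{Running time.} The only point needing care is the algorithmic claim $\Oh(|V(G^{[k]})|)$. Unrolling the induction, the sequence consists of contracting every innermost copy of $G$ following the $d$-sequence of $G$, then every next-level copy, and so on. Each contraction is written as a pair of vertex names, and the total number of contractions is $|V(G^{[k]})|-1$. Emitting these pairs in this nested order takes constant time per contraction, provided we output only the sequence of pairs and do not maintain the trigraphs. Because of this, the main obstacle is purely a matter of representation and not of correctness.
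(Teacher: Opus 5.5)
Your proof is correct. The lower bound comes from the induced copy of $G$; the upper bound contracts each module $\{u\}\times V(G)^{k-1}$ along a $d$-sequence of $G^{[k-1]}$ and then applies the given $d$-sequence to the red-free quotient $G$; the output costs $\Oh(1)$ per contraction. The paper gives no proof and only cites Lemma~7.11 of~\cite{twin-width3}, and your argument is the standard substitution argument for this lemma.

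One cosmetic mismatch: your running-time paragraph describes a level-by-level order, whereas your recursion produces a depth-first order. This is harmless, since both orders are valid $d$-sequences: at every stage the set being contracted is a module of a red-free trigraph. Also note that the $\Oh(1)$ cost per contraction assumes the $k$-tuples are encoded as integers rather than written out coordinate by coordinate.
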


We now have all the ingredients to show our main result, which we restate for convenience.

\begin{reptheorem}{thm:inapprox-main}
  Unless the ETH fails, there is a~constant $\gamma > 0$ such that \mis does not admit a~polynomial-time $n^{\gamma/(\log \log n)^2}$-approximation algorithm on $n$-vertex graphs of twin-width at~most~4, even when 4-sequences are given as part of the input. 
\end{reptheorem}
\begin{proof}
  Fix the constants $\beta < 1, B, c$ so as to satisfy \cref{thm:sat-inapprox}.
  We make the following reduction.
  Every $N$-variable $m$-clause \tsat-$B$ formula $\varphi$ is mapped to the graph $H := G_\varphi^{[k]}$ with $k := \lceil m/\log^{c+2} m \rceil$.
  We will assume that $m$ is large enough.
  In particular, the following inequalities hold:
  \begin{enumerate}
  \item\label{ineq:1} $\log\left(4(3B^2+6) \log(3m)\right) < \log(3m)$;
  \item\label{ineq:2} $\log m > 2 \log \log^{c+2} m + \log 3$;
  \item\label{ineq:3} $\frac{(e(\varphi)+3m) \cdot \lceil \log(3m) \rceil}{(1-\beta)m} > 2$.
  \end{enumerate}
  
  Building $H$ takes $\Oh(m \log m) + (m \log m)^{\Oh(k)}$ time. 
  Indeed, note that $e(\varphi) \leqslant B^2 N \leqslant 3 B^2 m$.
  So the first term in the running time corresponds to building~$G_\varphi$, and the second term, to building its $k$-th lexicographic power~$H$.
  By~\cref{thm:log-n-subdivision,lem:tww-lex-pow}, $H$ indeed has twin-width at~most~4, and a~4-sequence of~$H$ can be computed in time polynomial in $\lVert H \rVert$.

  Suppose there is an~$n^{\Oh(1)}$-time $r$-approximation algorithm $\mathcal A$ for \mis on $n$-vertex graphs of twin-width at~most~4.
  Our proof will work even if $\mathcal A$ has access to a~4-sequence of~$H$.
  
  Let $\delta \leqslant \Delta := 6(B^2+1)$ be such that \[(e(\varphi)+3m) \cdot \lceil \log(3m) \rceil = \delta m \log(3m).\]
  We set the constant \[\gamma := \frac{0.8(1-\beta)}{8(\beta+\Delta)}.\]
  To conclude the proof, we assume that $r = n^{\gamma/(\log \log n)^2}$ with $n := |V(G_\varphi)|^k$, and proceed to refute the ETH.  
  
  When run on~$H$, $\mathcal A$ returns an independent set $I_H$ of~$H$ of size at~least $\alpha(H)/r$.
  By~\cref{obs:alpha-lex-pow}, we find in time $\Oh((m \log m)^{2k})$ an independent set $I$ of~$G_\varphi$ of size at~least~\[\left(\frac{\alpha(H)}{r}\right)^{1/k}=\frac{\alpha(G_\varphi)}{r^{1/k}}.\]
  We report that $\varphi$ is satisfiable if $|I| > \beta m + \delta m \log(3m)$, and that at most $\beta m$ clauses of $\varphi$ are satisfiable, otherwise.

  \medskip

  \textbf{Running time.} Let us first verify that the overall process takes $2^{o(N/\log^c N)}$.
  Building~$H$, optionally computing a~4-sequence of~$H$, running~$\mathcal A$ on~$H$, finding~$I$, and concluding whether $\varphi$ is satisfiable takes time
  \[(m \log m)^{\Oh(k)} + n^{\Oh(1)} + \Oh((m \log m)^{2k}) = m^{\Oh(m/\log^{c+2} m)} = 2^{\Oh(m/\log^{c+1} m)}\]
  and, since $m \leqslant B N$ and $m \geqslant N/3$,
  \[2^{\Oh(m/\log^{c+1} m)} = 2^{\Oh(B N/\log^{c+1}(N/3))} = 2^{o(N/\log^c N)}.\]

  \textbf{Correctness.} We now check that we are correct on the satisfiability of~$\varphi$, and thus conclude by~\cref{thm:sat-inapprox}.
  The crux of the argument is the following claim.

  \begin{claim}\label{clm:ratio-tracing}
    $\left(m + \delta m \log(3m) \right) \cdot \frac{1}{r^{1/k}} > \beta m + \delta m \log(3m)$.
  \end{claim}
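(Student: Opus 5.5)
The claim amounts to $r^{1/k} < \frac{m + \delta m\log(3m)}{\beta m + \delta m\log(3m)} = 1 + \frac{(1-\beta)}{\beta + \delta\log(3m)}$. The plan is to bound both sides separately: an upper bound on $\log(r^{1/k})$ and a lower bound on the logarithm of the right-hand ratio, both expressed in terms of $1/\log m$.

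\textbf{Lower bound on the ratio.} Since $\delta \leqslant \Delta$, $\log(3m) \geqslant 1$ and $\beta \leqslant 1$, we have
\[\frac{1-\beta}{\beta+\delta\log(3m)} \geqslant \frac{1-\beta}{(\beta+\Delta)\log(3m)} =: x.\]
This quantity is at most $1$, and $\log(3m) \leqslant 2\log m$ for large $m$. Using $\ln(1+x) \geqslant x/2$ for $x\in[0,1]$, we get
\[\log\left(1+x\right) \geqslant \frac{x}{2} \geqslant \frac{1-\beta}{4(\beta+\Delta)\log m}.\]

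\textbf{Upper bound on $\log(r^{1/k})$.} By definition, $\log r^{1/k} = \frac{\gamma \log n}{k(\log\log n)^2}$. Since $n = |V(G_\varphi)|^k$, we have $\log n = k\log|V(G_\varphi)|$, so
\[\log r^{1/k} = \frac{\gamma\log|V(G_\varphi)|}{(\log\log n)^2}.\]
The numerator is controlled by \cref{thm:log-n-subdivision} and the bound $e(\varphi) \leqslant 3B^2 m$: $|V(G_\varphi)| \leqslant 3m + (3B^2+3)m\cdot 2(\log(3m)+1) \leqslant 4(3B^2+6)m\log(3m)$. Inequality~\ref{ineq:1} then gives $\log|V(G_\varphi)| < 2\log(3m) \leqslant 4\log m$.

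The denominator needs a lower bound on $\log n$. We have $\log n \geqslant k\log m \geqslant \frac{m}{\log^{c+2}m}$ (even ignoring the extra $\log m$ factor), hence
\[\log\log n \geqslant \log m - (c+2)\log\log m.\]
Inequality~\ref{ineq:2} is designed exactly so that this is at least roughly $\tfrac12\log m$, more precisely at least $(\log m + \log 3)/2 \geqslant \tfrac12 \log m$. Therefore $(\log\log n)^2 \geqslant \tfrac14\log^2 m$, and
\[\log r^{1/k} \leqslant \frac{16\gamma}{\log m}.\]

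\textbf{Conclusion.} It remains to check that $\frac{16\gamma}{\log m} < \frac{1-\beta}{4(\beta+\Delta)\log m}$, which holds if $\gamma < \frac{1-\beta}{64(\beta+\Delta)}$. The paper's choice $\gamma = \frac{0.8(1-\beta)}{8(\beta+\Delta)}$ is larger than this, so the constants in my estimates must be tightened. The most useful refinements are using $\log(3m)$ rather than $2\log m$ throughout, noting that $\log\log n$ is essentially $\log m$ rather than $\tfrac12\log m$, and using the sharper bound $\ln(1+x) \geqslant 0.8x$ for small $x$, which is likely the source of the factor $0.8$.

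Inequality~\ref{ineq:3} presumably serves a separate purpose: it guarantees that $\delta\log(3m)$ dominates $1-\beta$, so that $x$ is small. The main obstacle is exactly this constant bookkeeping, in particular matching the paper's $\gamma$ by bounding $\log|V(G_\varphi)|$ against $\log\log n$ tightly. Structurally, the argument is just the comparison of two $\Theta(1/\log m)$ quantities.
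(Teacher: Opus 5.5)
Your plan is the paper's own: bound $\log r^{1/k}=\gamma\log|V(G_\varphi)|/(\log\log n)^2$ from above via Inequalities~\ref{ineq:1} and~\ref{ineq:2}, bound the logarithm of the ratio from below via $\delta\le\Delta$ and a linear lower bound on $\log(1+x)$, and compare. Every individual estimate you write is correct. However, the proposal does not prove the claim. Your bounds $\log r^{1/k}\le 16\gamma/\log m$ and $\log(1+x)\ge(1-\beta)/(4(\beta+\Delta)\log m)$ only close when $\gamma<(1-\beta)/(64(\beta+\Delta))$. The claim is about the already fixed $\gamma=0.8(1-\beta)/(8(\beta+\Delta))=(1-\beta)/(10(\beta+\Delta))$. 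For this claim the constant comparison is the entire content, so ``the constants must be tightened'' leaves it unproven. Shrinking $\gamma$ would still yield Theorem~\ref{thm:inapprox-main}, but that is a different statement from this claim.

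The repair needs no new idea. In particular it does not need your suggestion that $\log\log n$ is essentially $\log m$: Inequality~\ref{ineq:2} only gives the factor $\frac12$, and that suffices. You lose a factor of about $6.4$ in two ways: by twice replacing $\log(3m)$ with $2\log m$, and by using $\ln(1+x)\ge x/2$.

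You already derived $\log\log n>\frac12(\log m+\log 3)=\frac12\log(3m)$. Keep it, together with $\log|V(G_\varphi)|<2\log(3m)$, to get $\log r^{1/k}<\gamma\cdot 2\log(3m)/(\frac14\log^2(3m))=8\gamma/\log(3m)$.

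On the other side, keep $x=(1-\beta)/((\beta+\Delta)\log(3m))$ as is. Since $\log$ is in base~2, concavity gives $\log(1+x)\ge x$ on $[0,1]$. Hence the logarithm of the ratio is at least $(1-\beta)/((\beta+\Delta)\log(3m))$. This exceeds $0.8(1-\beta)/((\beta+\Delta)\log(3m))=8\gamma/\log(3m)$, which in turn exceeds $\log r^{1/k}$.

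This is exactly the paper's computation, with one difference. The paper applies $\log(1+y)>0.8y$ directly to $y=(1-\beta)/(\beta+\delta\log(3m))$. That is what Inequality~\ref{ineq:3} is for: it places $y$ in $(0,1/2)$. So Inequality~\ref{ineq:3} is part of this very argument, not a separate ingredient as you guessed. Your route of first passing to $x$, which is automatically at most $1/12$, makes it unnecessary.
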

  \begin{claimproof}
    We first recall that
    \[n = |V(G_\varphi)|^k=(3m+(e(\varphi)+3m) \cdot 2 \lceil \log(3m) \rceil)^k \leqslant \left(4(3B^2+6)m \log(3m)\right)^k.\]
    So,
    \begin{equation}\label{ineq:4}
      \log n \leqslant k \log m + k \log\left(4(3B^2+6) \log(3m)\right) < 2k \log(3m),
    \end{equation}
    where the last inequality holds by~\cref{ineq:1}.
    And \[\log \log n \geqslant \log k \geqslant \log m - \log \log^{c+2} m > \frac{1}{2} \log(3m),\]
    where the last inequality holds by~\cref{ineq:2}.
    Hence,
    \begin{equation}\label{ineq:5}
      (\log \log n)^2 > \frac{1}{4} \log^2(3m).
    \end{equation}

    On the one hand, 
    \[\log(r^{1/k}) = \log\left(\left(2^{\gamma \log n/(\log \log n)^2}\right)^{1/k}\right) = \frac{\gamma \log n}{k (\log \log n)^2} < \frac{\gamma \cdot 2 \log(3m)}{\frac{1}{4} \log^2(3m)} = \frac{0.8(1-\beta)}{(\beta+\Delta) \log(3m)},\]
    where the inequality comes from the upper bound on the numerator by~(\ref{ineq:4}) and the lower bound on the denominator by~(\ref{ineq:5}).

    On the other hand,
    \[\log \left( \frac{m + \delta m \log(3m)}{\beta m + \delta m \log(3m)} \right)
    = \log \left( 1+ \frac{(1-\beta)m}{\beta m + \delta m \log(3m)} \right) \]
    \[= \log \left(1+ \frac{1-\beta}{\beta + \delta \log(3m)}\right) > \frac{0.8(1-\beta)}{\beta + \delta \log(3m)} > \frac{0.8(1-\beta)}{(\beta + \delta) \log(3m)} \geqslant \frac{0.8(1-\beta)}{(\beta + \Delta) \log(3m)},\]
    where the first inequality holds because for every $x \in (0,1/2)$, $\log(1+x) > 0.8 x$, and by~\cref{ineq:3} $(1-\beta)/(\beta + \delta \log(3m)) \in (0,1/2)$, and the second inequality holds because $\log(3m) > 1$.  

    Therefore, \[ \frac{m + \delta m \log(3m)}{\beta m + \delta m \log(3m)} > r^{1/k},\]
    and we conclude.
  \end{claimproof}

  By~\cref{clm:ratio-tracing}, if $|I| \leqslant \beta m + \delta m \log(3m)$, then $\alpha(G_\varphi) < m + \delta m \log(3m)$.
  By~\cref{thm:log-n-subdivision}, this indeed implies that $\varphi$ is unsatisfiable.

  If instead, $|I| > \beta m + \delta m \log(3m)$, there is a~truth assignment that satisfies strictly more than $\beta m$ clauses of~$\varphi$.
  So, in this promise-problem setting, we correctly concluded that $\varphi$ is satisfiable.
\end{proof}

\section{Coloring Approximation Hardness in Graphs of Bounded Twin-Width}\label{sec:inapproxCOL}

We show the same hardness of approximation for \textsc{Min Coloring} as we showed in the previous section.
The general strategy is the same but several elements need to be changed.
The following consequence of \cite[Thm.\ 14]{Bogdanov02} and \cref{thm:sat-inapprox} is a~convenient alternative to the mere \cref{thm:sat-inapprox} of~\cref{sec:inapprox}. 

\begin{theorem}\label{thm:inapprox-max-3-col}
  There are positive constants $\beta < 1, \Delta, c$ such that, for $n$-vertex graphs $H$ of maximum degree at~most~$\Delta$ promised to satisfy one of the following two conditions  
  \begin{compactitem}
  \item $H$ is 3-colorable, or
  \item no induced subgraph of~$H$ on at~least~$\beta n$ vertices is 3-colorable
  \end{compactitem}
  determining which one holds requires $2^{\Omega(n/\log^c n)}$ time, unless the ETH fails. 
\end{theorem}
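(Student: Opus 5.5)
We prove \cref{thm:inapprox-max-3-col} by composing \cref{thm:sat-inapprox} with the gap-preserving reduction of~\cite[Thm.\ 14]{Bogdanov02}. That reduction maps \tsat-$B$ formulas to bounded-degree graphs, with only linear size blow-up, sending satisfiable formulas to 3-colorable graphs and far-from-satisfiable formulas to graphs that are far from 3-colorable. The only real work is to check two things: that the size blow-up is linear, so the $2^{\Omega(n/\log^c n)}$ lower bound survives up to constants, and that the soundness notion of~\cite{Bogdanov02}, which is about edges, can be converted into the vertex version stated here (no large induced 3-colorable subgraph).

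\textbf{Step 1.} Start from an $N$-variable, $m$-clause \tsat-$B$ instance $\varphi$ as in \cref{thm:sat-inapprox}, with $N/3 \leqslant m \leqslant BN$. Apply the construction of~\cite[Thm.\ 14]{Bogdanov02}, which is a standard gadget reduction: a constant-size gadget per variable and per clause, plus a palette triangle replicated through bounded-degree expanders so that the maximum degree stays constant. This yields a graph $H$ with $n = \Theta(m)$ vertices and maximum degree at most a constant $\Delta$ depending only on~$B$.

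Completeness is standard. A satisfying assignment extends to a proper 3-coloring of all gadgets, so $H$ is 3-colorable.

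\textbf{Step 2 (soundness).} Suppose at most $\beta m$ clauses of $\varphi$ are satisfiable. The result of~\cite{Bogdanov02} guarantees that every 3-coloring of~$H$ violates at least $\varepsilon |E(H)|$ edges, for a constant $\varepsilon>0$. Now let $S$ be a set of vertices such that $H[S]$ is 3-colorable. Extend any proper 3-coloring of $H[S]$ arbitrarily to all of $V(H)$. Every monochromatic edge then touches $V(H)\setminus S$, so there are at most $\Delta\,|V(H)\setminus S|$ of them. Since $|E(H)| \geqslant$ some constant times $n$ (no isolated vertices in the gadget graph), we get $|V(H)\setminus S| \geqslant \varepsilon' n$ for a constant $\varepsilon'>0$. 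Setting $\beta' := 1-\varepsilon'/2$, no induced subgraph on at least $\beta' n$ vertices is 3-colorable. This is the second condition, with $\beta'$ in place of $\beta$.

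\textbf{Step 3 (running time).} The reduction runs in polynomial time, indeed near-linear. Since $n = \Theta(N)$, an algorithm deciding the promise problem in time $2^{o(n/\log^c n)}$ would decide the gap version of \tsat-$B$ in time $2^{o(N/\log^c N)}$. This contradicts \cref{thm:sat-inapprox} under the ETH.

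The main obstacle is Step 2. One has to confirm that the gap of~\cite{Bogdanov02} is stated with linear size and constant degree, and that its soundness is measured in a way (violated edges, or equivalently vertices to delete) that transfers to induced subgraphs with only constant loss. If~\cite{Bogdanov02} states soundness for Max-3-Colorable-Subgraph in edge form, the averaging argument above suffices. If the degree is not already bounded there, I would first apply the usual expander-replacement degree reduction, which preserves the gap up to constants.
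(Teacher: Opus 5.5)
Your proposal is correct and follows essentially the same route as the paper: compose \cref{thm:sat-inapprox} with the linear-size, bounded-degree reduction of~\cite[Thm.\ 14]{Bogdanov02}, then turn its edge-fraction soundness into the induced-subgraph form by extending a proper 3-coloring of $H[S]$, charging each monochromatic edge to a vertex outside $S$ (at most $\Delta$ edges per vertex), using $|E(H)| = \Omega(|V(H)|)$, and halving the slack to make the inequality strict. Only the upper bound $n \leqslant \kappa N$ is needed to transfer the running time, and the paper takes bounded degree directly from the cited reduction, so your expander-replacement fallback is unnecessary.
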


\begin{proof}
  We run the linear reduction from \tsat-$B$ to \textsc{3-Coloring} of~\cite[Thm.\ 14]{Bogdanov02} on the hard instances of~\cref{thm:sat-inapprox}.
  To avoid a~notational clash, we rename the constant $\beta$ as $\beta'<1$ in the latter theorem, and denote by $N$ the number of variables of the \tsat-$B$ instances, and by~$m$ the number of clauses in each instance. 

  The reduction of \cite{Bogdanov02} (see the description preceding Theorem 14) produces graphs with at most $\kappa N$ vertices and maximum degree at~most~$\Delta$ for some constants $\kappa := \kappa(B), \Delta$.
  The image of every satisfiable \tsat-$B$ instance is 3-colorable, whereas in every (improper) 3-coloring of the image $H$ of any \tsat-$B$ instance $\varphi$ such that $\val(\varphi) \leqslant \beta' m$ at~most a~$\beta''$ fraction of the edges are properly colored (i.e., bichromatic) where $\beta'' := \frac{\beta'+7}{8}$.
  We observe that $\beta'' < 1$, since $\beta' < 1$, and choose $\beta := 1-\frac{1-\beta''}{2\Delta}<1$.
  Let $n := |V(H)|$.

  Assume that there is a~3-colorable induced subgraph of~$H$ with at~least $\beta n$ vertices.
  Any extension to a~complete 3-coloring of~$H$ improperly colors at~most $(1-\beta)\Delta n < (1-\beta'')n$ edges, because $H$ has maximum degree at~most~$\Delta$.
  Here we observe that the reduction in~\cite{Bogdanov02} satisfies that $|V(H)| \leqslant |E(H)|$.
  Thus we reach the contradiction that more than a~$\beta''$ fraction of the edges of~$H$ are properly colored.

  This way, the two items of the current theorem correspond to the two outcomes of~\cref{thm:sat-inapprox}.
  We conclude by taking the same value for $c$ as in~\cref{thm:sat-inapprox} since the blow-up $\kappa$ is constant. 
\end{proof}

We now adapt~\cref{thm:log-n-subdivision}.
For that, we modify the subdivisions slightly without changing their twin-width upper bound of~4.
One should also notice the hybrid nature of the last two items: in ``positive'' instances, we preserve a~low \emph{chromatic number}, whereas in ``negative'' instances, we upper-bound the \emph{independence number} of the produced graphs.  

\begin{theorem}\label{thm:log-n-triangle-subdivision}
  There is an $\Oh(\lVert G \rVert)$-time algorithm that, given an $n$-vertex $m$-edge graph~$H$, outputs a~graph $G := G_H$ of twin-width at~most~4 such that
  \begin{compactitem}
    \item $|V(G)|=3n+(3n+3m) \cdot 3 \lceil \log(3n) \rceil$;
    \item if $H$ is 3-colorable, then $G$ is 3-colorable;
    \item for any number $p$, if no induced subgraph of~$H$ on at~least~$p$ vertices is 3-colorable, then \[\alpha(G) \leqslant p + (3n+3m) \cdot \lceil \log(3n) \rceil.\]
  \end{compactitem}
  Furthermore, a~4-sequence of $G$ can be computed in $\Oh(\lVert G \rVert)$ time.
\end{theorem}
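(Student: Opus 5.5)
The plan is to build $G$ from an auxiliary graph $J$ with $3n$ vertices and $3n+3m$ edges, chosen so that both the vertex count and the independence bound come out as stated. Each vertex $v$ of $H$ is replaced by a triangle on $(v,1),(v,2),(v,3)$, which accounts for $3n$ edges. Each edge $uv$ of $H$ contributes the three edges $(u,i)(v,i)$ for $i\in[3]$, which accounts for $3m$ edges. Then every edge $xy$ of $J$ is replaced by a gadget $P_{xy}$: a chain of $\ell:=\lceil \log(3n)\rceil$ triangles $T_1,\dots,T_\ell$, where $T_j=\{a_j,b_j,c_j\}$. This gives $3\ell$ new vertices per edge, hence $|V(G)|=3n+(3n+3m)\cdot 3\lceil\log(3n)\rceil$.

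\textbf{The gadget.} I would make the chain rigid. Let $a_j$ be adjacent to $b_{j+1},c_{j+1}$, and cyclically likewise for $b_j$ and $c_j$. Let $x$ be adjacent to $b_1,c_1$, and let $y$ be adjacent to $a_\ell$.
\begin{itemize}
\item In any proper 3-coloring, $a_{j+1}$ must receive the colour of $a_j$, since $a_j$ sees exactly the other two vertices of $T_{j+1}$. So all triangles are coloured identically.
\item Likewise, if an independent set meets every triangle, it picks the same index $i$ in each of them.
\end{itemize}
\emph{3-colorability.} Take a proper 3-coloring $\chi$ of $H$ and colour $(v,i)$ with $\chi(v)+i \bmod 3$. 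Triangles of $J$ are then rainbow, and $(u,i),(v,i)$ get distinct colours because $\chi(u)\ne\chi(v)$. Each gadget $P_{xy}$ is then coloured by giving $a_j$ the colour of $x$ and distributing the other two colours on $b_j,c_j$ consistently along the chain. This is proper: $x$ differs from $b_1,c_1$, and $y$ differs from $a_\ell$ because $y$ and $x$ are adjacent in $J$ and so have different colours.

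\textbf{Independence bound.} Fix an independent set $I$ of $G$ and let $X:=I\cap V(J)$.
\begin{itemize}
\item Each gadget contains at most $\ell$ vertices of $I$, one per triangle.
\item If both endpoints $x,y$ lie in $X$, the gadget contains at most $\ell-1$. Indeed, $x\in I$ forces the pick in $T_1$ to be $a_1$, rigidity then forces $a_\ell$, and $a_\ell$ is adjacent to $y$.
\item Hence $|I|\le |X|-|E(J[X])|+(3n+3m)\ell$.
\end{itemize}
Deleting one endpoint of each edge of $J[X]$ leaves an independent set $Y$ of $J$ with $|Y|\ge |X|-|E(J[X])|$. Since $Y$ contains at most one vertex per vertex-triangle, I would read $(v,i)\in Y$ as ``$v$ gets colour $i$''. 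The edges $(u,i)(v,i)$ guarantee this is a proper 3-coloring of the subgraph of $H$ induced by the vertices $v$ that appear in $Y$. That induced subgraph has $|Y|$ vertices, so $|Y|<p$ by hypothesis, which gives the claimed bound on $\alpha(G)$. This argument also supplies the elaboration promised in the proof of \cref{thm:log-n-subdivision}.

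\textbf{Main obstacle: twin-width.} \Cref{lem:tww-subd} covers only plain subdivisions, whereas here each edge is replaced by a chain of triangles of length $\ell\ge\log(3n)$. I would redo the argument of~\cite{Berge22} with each triangle $T_j$ playing the role of one subdivision vertex. First, contract each gadget triangle internally; between consecutive triangles this creates red degree at most $2$, and only the two end-triangles see $x$ or $y$. After this step the graph is, up to red edges, a $(\ge \ell)$-subdivision of $J$, where $J$ has $3n$ vertices. The 4-sequence of \cref{lem:tww-subd} then applies, since its proof only needs about $\log(3n)$ subdivision vertices per edge. The delicate point is the initial phase: contracting triangle vertices while maintaining red degree at most $4$ requires doing it one gadget at a time, adjacent triangle by triangle, and I expect to have to check this carefully. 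The construction clearly runs in linear time.
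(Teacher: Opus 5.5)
Your auxiliary graph $J$, the vertex count, the extension of a 3-coloring, and the independence bound are all correct. In substance they are the paper's argument: partition each gadget into $\ell$ triangles, and use the forcing $x\in I\Rightarrow a_1\in I\Rightarrow\dots\Rightarrow a_\ell\in I$, which clashes with $y$.

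\textbf{The gap.} The gap is the twin-width bound, which you flag as the main obstacle, and your choice of gadget is what blocks it. Consecutive triangles are joined by co-matchings, so $T_j\cup T_{j+1}$ induces an octahedron and the gadget has no twins. Collapsing it onto a path therefore creates red edges that never disappear.
\begin{itemize}
\item Contracting $T_1$ makes the edge to $x$ red, since $x$ sees $b_1,c_1$ but not $a_1$. Contracting $T_\ell$ makes the edge to $y$ red, since $y$ sees only $a_\ell$.
\item If you first contract all gadgets, as proposed, a vertex $(v,i)$ of $J$ ends with red degree $\deg_J((v,i))=2+\deg_H(v)$. This is unbounded for general $H$, and already exceeds $4$ when $\deg_H(v)\geqslant 3$.
\item The resulting object is a trigraph with red edges, not a graph, so \cref{lem:tww-subd} cannot be invoked. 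A 4-sequence of the underlying graph does not control red degrees when red edges are already present.
\item Your paths have only $\ell=\lceil\log(3n)\rceil$ internal nodes. \Cref{lem:tww-subd} applied to the $3n$-vertex graph $J$ needs $2\lceil\log(3n)\rceil$, and your claim that the proof of~\cite{Berge22} only needs about $\log(3n)$ is unsupported.
\end{itemize}
As written, twin-width at most $4$ and the linear-time 4-sequence are not established.

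\textbf{The repair.} Thin the gadget so that it contains twins. Keep only the edges from $a_j$ to $b_{j+1},c_{j+1}$, with $x$ in the role of $a_0$ (still adjacent to $b_1,c_1$) and $y$ adjacent to $a_\ell$. Then $b_j$ and $c_j$ are true twins with closed neighborhood $\{a_{j-1},a_j,b_j,c_j\}$. Equivalently, $2\lceil\log(3n)\rceil$-subdivide each edge of $J$ and give every odd-indexed subdivision vertex a true twin; this is exactly the paper's construction.
\begin{itemize}
\item Both forcing arguments go through unchanged, since they only use that $a_j$ sees the other two vertices of $T_{j+1}$. Rigidity of the $b$'s and $c$'s is never needed.
\item The vertex count is the same.
\item Contracting the $(3n+3m)\ell$ twin pairs creates no red edge and leaves precisely the $2\lceil\log(3n)\rceil$-subdivision of $J$. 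Hence \cref{lem:tww-subd} applies as a black box and yields the linear-time 4-sequence, with nothing from~\cite{Berge22} to redo.
\end{itemize}
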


\begin{proof}
  We first turn every vertex $v$ of~$H$ into a~triangle $(v,1)(v,2)(v,3)$.
  For every $uv \in E(H)$, we add the three edges $(u,1)(v,1)$, $(u,2)(v,2)$, $(u,3)(v,3)$.
  So far, we get a~graph $J$ with $3n$ vertices and $3n+3m$ edges.

  To obtain $G$, we $s$-subdivide every edge $e$ of $J$, for $s := 2 \lceil \log(3n) \rceil$.
  We arbitrarily orient $e=uv$, say, from~$u$ to~$v$, and we add a~true twin to each odd-indexed vertex on the new path between $u$ and $v$: $u (uv)_1 (uv)_2 (uv)_3 \ldots (uv)_s v$.
  (Note that this ``triangle-subdivision'' was already used in a~similar context in~\cite{Bonnet26}.)
  That concludes the construction of~$G$.
  
  From this description, one can build $G$ in time proportional to its size.
  One can then observe that $|V(G)|=3n+(3n+3m) \cdot 3 \lceil \log(3n) \rceil$.
  Indeed we add $\lceil \log(3n) \rceil$ true twins per subdivided edge of~$J$.
  The $\Oh(\lVert G \rVert)$-time algorithm to compute a~4-sequence of~$G$ follows the same argument as in the proof of~\cref{thm:log-n-subdivision} after contracting every pair of true twins.
  The latter has the same effect as removing the added true twins (without creating any red edges at this point).

  Let us show that $\chi(H) \leqslant 3$ implies that $\chi(G) \leqslant 3$.
  Let $\mu \colon V(H) \to [3]$ be a~proper \mbox{3-coloring} of $H$.
  We define $\nu' \colon V(J) \to [3]$ by $\nu'(u,i) := (\mu(u)+i) \bmod 3 + 1$.
  This is a~proper coloring of~$J$: each triangle $(u,1)(u,2)(u,3)$ has all three colors, and a~monochromatic edge $(u,i)(v,i)$ under $\nu'$ would imply that $uv$ is monochromatic under $\mu$.
  To extend $\nu' \colon V(J) \to [3]$ to a~proper coloring $\nu \colon V(G) \to [3]$ of~$G$, give the color $\nu'(u)$ to every even-indexed vertex on the path: $u (uv)_1 (uv)_2 (uv)_3 \ldots (uv)_s v$ (when $uv$ was oriented from $u$ to~$v$).
  Finally, we assign the two remaining colors (the ones not already in their neighborhood) to each pair of true twins.

  We move to the last item of the statement.
  First note that the set $T_{uv}$ of $3 \lceil \log(3n) \rceil$ new vertices introduced for each edge $uv \in E(J)$ can be vertex-partitioned into $\lceil \log(3n) \rceil$ triangles.
  Therefore, at~most $\lceil \log(3n) \rceil$ of these vertices can be part of the same independent set.
  Similarly to the mere subdivision, there is a~largest independent set of~$G$ that for every $uv \in E(J)$ contains at~most~one original vertex among $u, v$.
  Indeed, if both $u,v \in V(G)$ are in an independent set $I$ of~$G$, only $\lceil \log(3n) \rceil - 1$ vertices of~$T_{uv}$ can be part of~$I$.
  This is because $T_{uv} \setminus N_G(\{u,v\})$ can be vertex-partitioned into $\lceil \log(3n) \rceil - 1$ triangles.
  One can thus remove, say, $v$ from $I$, and have $\lceil \log(3n) \rceil$ vertices of~$T_{uv}$ in~$I$.
  We conclude that \[\alpha(G) \leqslant \alpha(J) + |E(J)| \lceil \log(3n) \rceil = \alpha(J) + (3n+3m) \lceil \log(3n) \rceil.\]
  Actually the equality holds.

  We then claim that $\alpha(J) \leqslant \max\{|X| : X \subseteq V(H), H[X]~\text{is 3-colorable}\}$ (again, the equality holds).
  Let $I$ be an independent set of~$J$.
  Because of the triangles $(u,1)(u,2)(u,3)$, $I$ contains at~most~one vertex of the form $(u,i)$ for every $u \in V(H)$.
  Furthermore, for every edge $uv \in E(H)$, it cannot be that both $(u,i)$ and $(v,i)$ are in~$I$ for some $i \in [3]$.
  Thus, the partial 3-coloring obtained by giving color $i$ to vertex $u$ for every $(u,i) \in I$ is well-defined, proper, and colors $|I|$ vertices of~$H$.  
  Therefore if no induced subgraph of~$H$ on at~least~$p$ vertices is 3-colorable, then \[\alpha(G) \leqslant p + (3n+3m) \cdot \lceil \log(3n) \rceil.\qedhere\]
\end{proof}

Using these ingredients, we can now prove the main result of this section.
The omitted proof is similar to the one of \cref{thm:inapprox-main}.
\begin{reptheorem}{thm:inapprox-col}
  Unless the ETH fails, there is a~constant $\gamma > 0$ such that \textsc{Min Coloring} does not admit a~polynomial-time $n^{\gamma/(\log \log n)^2}$-approximation algorithm on $n$-vertex graphs of twin-width at~most~4, even when a~4-sequence is given as part of the input. 
\end{reptheorem}

\begin{proof}
  Fix the constants $\beta < 1, \Delta, c$ satisfying \cref{thm:inapprox-max-3-col}.
  Every promised $N$-vertex instance $H$ of~\cref{thm:inapprox-max-3-col} is mapped to the graph $G^{[k]}$ with $G := G_H$, as defined in~\cref{thm:log-n-triangle-subdivision}, and $k := \lceil N/\log^{c+2} N \rceil$.
  We assume, without loss of generality, that $N \geqslant 2$ is large enough.
  In particular, the following inequalities hold:
  \begin{enumerate}
  \item\label{ineq:b1} $\lceil \log(3N) \rceil \leqslant 4 \log N$ (this is implied by $N \geqslant 2$);
  \item\label{ineq:b2} $(1-\beta)/(\beta + 3(\Delta+1) \lceil \log(3N) \rceil) \in (0,1/2)$;
  \item\label{ineq:b3} $\log k \geqslant \log \sqrt{N} = 0.5 \log N$;
  \item\label{ineq:b4} $12(\Delta+1)N \geqslant \lceil \log(3N) \rceil$.
  \item\label{ineq:b5} $N > \log^{c+2} N$, thus $k \geqslant 2$.
  \end{enumerate}

  Building $G$ takes $\Oh(\lVert G \rVert)=\Oh(\Delta N \log N)=\Oh(N \log N)$ time, and building $G^{[k]}$ takes time
  \[\Oh(\lVert G^{[k]} \rVert)=2^{\Oh\left(\frac{N \log N}{\log^{c+2} N}\right)} = 2^{\Oh\left(\frac{N}{\log^{c+1} N}\right)}.\]
  By~\cref{thm:log-n-triangle-subdivision,lem:tww-lex-pow}, $G^{[k]}$ indeed has twin-width at~most~4, and a~4-sequence of~$G^{[k]}$ can be computed in time polynomial in $\lVert G^{[k]} \rVert$.
  Thus the rest of the proof works even if the algorithm $\mathcal A$ has access to a~4-sequence of~$G^{[k]}$.

  Suppose there is an~$n^{\Oh(1)}$-time $r$-approximation algorithm $\mathcal A$ for \textsc{Min Coloring} on $n$-vertex graphs of twin-width at~most~4.
  We set $n := |V(G^{[k]})|$ and \[\gamma := \frac{0.8 (1-\beta)}{768(\beta + 3(\Delta+1))(\Delta+1)}.\]
  We now assume that $r = n^{\gamma/(\log \log n)^2}$ and refute the ETH.
  We run $\mathcal A$ on $G^{[k]}$ and obtain a~proper coloring of this graph with at~most $r \cdot \chi(G^{[k]})$ colors.
  Based on this coloring, we will see that we can decide whether $H$ is 3-colorable.
  This is, by~\cref{thm:inapprox-max-3-col}, a~contradiction to the ETH once we bound the running time of the reduction, which we now do.

  \medskip

  \textbf{Running time.} 
  Building~$G^{[k]}$, optionally computing a~4-sequence $\mathcal S$ of~$G^{[k]}$, and running~$\mathcal A$ on $G^{[k]}, \mathcal S$ takes time
  \[\Oh\left(2^{\Oh\left(\frac{N}{\log^{c+1} N}\right)} + n^{\Oh(1)}\right) = 2^{\Oh\left(\frac{N}{\log^{c+1} N}\right)}=2^{o(N/\log^c N)},\]
  as desired.

  \medskip

  \textbf{Case when $\bm{H}$ is 3-colorable.}
  By~\cref{thm:log-n-triangle-subdivision}, it holds that $\chi(G) \leqslant 3$.
  Therefore, $\chi(G^{[k]}) \leqslant 3^k$, by giving color $(\nu(v_1), \ldots, \nu(v_k))$ to each vertex $(v_1, \ldots, v_k) \in V(G^{[k]})$ for a~fixed proper 3-coloring $\nu$ of~$G$. 

  \medskip

  \textbf{Case when $\bm{H}$ has no 3-colorable induced subgraph on at least $\bm{\beta N}$ vertices.}
  We have 
  \[\chi(G^{[k]}) \geqslant \frac{|V(G^{[k]})|}{\alpha(G^{[k]})}= \frac{|V(G)|^k}{\alpha(G^{[k]})} \geqslant \frac{|V(G)|^k}{\alpha(G)^k}=\left(\frac{|V(G)|}{\alpha(G)}\right)^k.\]
  where the first inequality holds for any graph, and the second inequality uses \cref{obs:alpha-lex-pow}.
 
  We set \[q := (3N+3|E(H)|) \lceil \log(3N) \rceil \leqslant 3(\Delta+1)N \lceil \log(3N) \rceil,\] since $H$ has maximum degree at~most~$\Delta$.
  By~\cref{thm:log-n-triangle-subdivision},
  \[\alpha(G) \leqslant \beta N + q \text{~and~} |V(G)| = 3N + 3q.\]
  Thus,
  \[\chi(G^{[k]}) \geqslant \left(\frac{3N+3q}{\beta N+q}\right)^k = 3^k \cdot \left(1+\frac{(1-\beta)N}{\beta N+q}\right)^k \geqslant 3^k \cdot \left(1+\frac{1-\beta}{\beta + 3(\Delta+1) \lceil \log(3N) \rceil}\right)^k.\]

  \medskip

  \textbf{Approximation gap.}
  The previous paragraphs have established that the approximation gap for $\chi(G^{[k]})$ is at~least
  \[\frac{3^k \cdot \left(1+(1-\beta)/(\beta + 3(\Delta+1) \lceil \log(3N) \rceil)\right)^k}{3^k} = \left(1+\frac{1-\beta}{\beta + 3(\Delta+1) \lceil \log(3N) \rceil}\right)^k.\]
  Thus we just need to show that \[r := n^{\gamma/(\log \log n)^2} < \left(1+\frac{1-\beta}{\beta + 3(\Delta+1) \lceil \log(3N) \rceil}\right)^k\] in order to conclude.

  \begin{claim}
    It holds that $\log r < k \log \left(1+\frac{1-\beta}{\beta + 3(\Delta+1) \lceil \log(3N) \rceil}\right)$.
  \end{claim}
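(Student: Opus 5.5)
We bound the two sides separately, as in \cref{clm:ratio-tracing}: an upper bound on $\log r$ of order $\gamma \log n/(\log N)^2$, and a lower bound on the right-hand side of order $k/\log N$.

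\textbf{Upper bound on $\log n$.} We have $n = |V(G)|^k = (3N+3q)^k$ with $q \leqslant 3(\Delta+1)N\lceil\log(3N)\rceil$. By~\cref{ineq:b1}, $q \leqslant 12(\Delta+1)N\log N$. Also $3N \leqslant q$, which follows from~\cref{ineq:b4} or simply from $\lceil \log(3N)\rceil \geqslant 1$. Hence
\[|V(G)| \leqslant 48(\Delta+1)N\log N \leqslant (48(\Delta+1)N)^2 .\]
Since $\Delta$ is a constant and $N$ is large, this gives $\log|V(G)| \leqslant 2\log N + \Oh(1) \leqslant 3\log N$. Therefore $\log n \leqslant 3k\log N$. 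The constant in the final $\gamma$ is generous, so I would aim for something like $\log n \leqslant 4k\log N$ and keep the slack.

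\textbf{Lower bound on $\log\log n$.} Since $|V(G)|\geqslant 2$, we have $\log n \geqslant k$. With \cref{ineq:b3} this gives $\log\log n \geqslant \log k \geqslant 0.5\log N$, so $(\log\log n)^2 \geqslant \frac14\log^2 N$. Combining with the previous bound,
\[\log r = \frac{\gamma\log n}{(\log\log n)^2} \leqslant \frac{\gamma\cdot 3k\log N}{\frac14 \log^2 N} = \frac{12\gamma k}{\log N}.\]

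\textbf{Lower bound on the right-hand side.} Set $x := (1-\beta)/(\beta+3(\Delta+1)\lceil\log(3N)\rceil)$. By \cref{ineq:b2}, $x\in(0,1/2)$, so $\log(1+x) > 0.8x$. Using \cref{ineq:b1} and $\beta < \log N$, we get $\beta + 3(\Delta+1)\cdot 4\log N \leqslant 12(\beta+3(\Delta+1))\log N$. Hence
\[k\log(1+x) > \frac{0.8(1-\beta)\,k}{12(\beta+3(\Delta+1))\log N}.\]

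\textbf{Conclusion.} It then suffices that
\[12\gamma < \frac{0.8(1-\beta)}{12(\beta+3(\Delta+1))}.\]
This holds for the chosen $\gamma = 0.8(1-\beta)/(768(\beta+3(\Delta+1))(\Delta+1))$, since $768(\Delta+1) > 144$.

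The only delicate point is absorbing the $\log\log N$ and constant factors when bounding $\log n$ by a constant times $k\log N$. This is routine because $N$ is large, and the very small $\gamma$ leaves ample room. \Cref{ineq:b5} ($k\geqslant 2$) only ensures that $\log k$ and $\log\log n$ are positive and well defined.
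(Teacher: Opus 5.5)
Your proof is correct and follows the paper's argument: you bound $\log n$ by a constant times $k\log N$, use $(\log\log n)^2 \geqslant \log^2 k \geqslant \frac14\log^2 N$ via \cref{ineq:b3}, and use $\log(1+x) > 0.8x$ together with \cref{ineq:b1}. The one difference is that you cancel $k$ on both sides, whereas the paper substitutes $N/\log^{c+2}N \leqslant k < 2N/\log^{c+2}N$ via \cref{ineq:b5}; this is slightly cleaner, avoids \cref{ineq:b5}, and replaces \cref{ineq:b4} by a mild extra largeness assumption, which the slack in $\gamma$ easily covers. (Two harmless slips: $3N \leqslant q$ follows from $\lceil\log(3N)\rceil \geqslant 1$, not from \cref{ineq:b4}, and $\beta \leqslant 12\beta\log N$ holds because $\log N \geqslant 1/12$, not because $\beta < \log N$.)
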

  \begin{claimproof}
    We have $\log r = \frac{\gamma \log n}{(\log \log n)^2}$ and
    \[\log n = k \log |V(G)| = k \log(3N+3q) \leqslant k \log (12(\Delta+1)N \lceil\log(3N)\rceil) \leqslant 2k \log(12(\Delta+1)N),\]
    where the last inequality uses \cref{ineq:b4}.
    
    Besides, $(\log \log n)^2 \geqslant \log^2 k$.
    Thus,
    \[\log r \leqslant \frac{\gamma \cdot 2k \log (12(\Delta+1)N)}{\log^2 k} \leqslant \frac{\gamma \cdot 24(\Delta+1) k \log N}{\log^2 k} < \gamma \cdot 48(\Delta+1) \cdot \frac{N}{\log^{c+1} N \cdot \log^2 k}\]
    \[\leqslant \gamma \cdot 48(\Delta+1) \cdot \frac{N}{\log^{c+1} N \cdot 0.25 \log^2 N} = \gamma \cdot 192(\Delta+1) \cdot \frac{N}{\log^{c+3} N},\]
    where the penultimate inequality holds because of~\cref{ineq:b5}, the last inequality uses \cref{ineq:b3}, whereas the other inequalities always hold.

    On the other hand,
    \[k \log \left(1+\frac{1-\beta}{\beta + 3(\Delta+1) \lceil \log(3N) \rceil}\right) \geqslant \frac{k \cdot 0.8 (1-\beta)}{\beta + 3(\Delta+1) \lceil \log(3N) \rceil} \geqslant \frac{k \cdot 0.8 (1-\beta)}{4(\beta + 3(\Delta+1)) \log N}\]
    \[\geqslant \frac{0.8 (1-\beta)}{4(\beta + 3(\Delta+1))} \cdot \frac{N}{\log^{c+3} N},\]
    where the first inequality holds because of~\cref{ineq:b2}, and the second uses \cref{ineq:b1}.

    And we conclude since $\gamma = \frac{0.8 (1-\beta)}{4(\beta + 3(\Delta+1))} \cdot \frac{1}{192(\Delta+1)}$.
  \end{claimproof}
  This concludes the proof since, under the promise, $H$ is 3-colorable if and only if $\mathcal A$ uses at~most $r \cdot 3^k$ colors. 
\end{proof}

\section{\kis in graphs of bounded low-radius merge-width}\label{sec:w1-hard}

In this section, we reduce \gridtiling to \kis, to show the W[1]-hardness of the latter problem in graphs of bounded low-radius merge-width.
The reduction is inspired by the one of Dreier, M{\"a}hlmann, and Siebertz.~\cite[Thm.\ 2]{DMS26} and we use their notations. 
The \gridtiling problem was introduced in~\cite{Marx07}.
Given some integers $k, n \in \N$ and a~collection~$\S$ of \emph{tile} subsets $S_\pi \subseteq [n]^2$ for each \emph{grid cell} $\pi \in [k]^2$,
the goal is to decide whether there exists a~selection of tiles $s(\pi) \in S_\pi$ for all $\pi \in [k]^2$ such that:
\medskip
\begin{compactitem}
    \item The tiles $s(\pi)$ and $s(\pi_\blacktriangleright)$ have the same first component
    
    for all cells $\pi = (i,j) \in [k-1] \times [k]$ and $\pi_\blacktriangleright := (i+1,j)$; and

    \item The tiles $s(\pi)$ and $s(\pi_\triangledown)$ have the same second component
    
    for all cells $\pi = (i,j) \in [k] \times [k-1]$ and $\pi_\triangledown := (i,j+1)$.
\end{compactitem}
\medskip
It is known that \gridtiling is W[1]-hard parameterized by the grid size $k$ and unless the ETH fails it has no $f(k)n^{o(k)}$-time algorithm for any computable function $f$~\cite[Thm.\ 14.28]{Parameterized_Algorithms}. 

\begin{reptheorem}{thm:main-paramIS}
  \kis is W[1]-hard in graphs of bounded $\mw_{o(k)}$.
\end{reptheorem}
\begin{proof}
  Let $r\geq 3$ be an odd integer.
  For a~\gridtiling instance $(\S, k, n)$, we will construct a~graph $G^r_\S$ with $\mw_{r-1}(G^r_\S)\leq 4r-3$.  
  For each grid cell $\pi \in [k]^2$, we construct a cell gadget~$C^r_{\pi}$, illustrated in \cref{fig:gadgetcell}.  
  The cell gadget $C^r_{\pi}$ consists of a~horizontal path and a~vertical path of $r$ cliques:
  \begin{compactitem}
    \item $B^{h_1}_\pi,B^{h_2}_\pi, \dots, B^{h_r}_\pi $ where $B^{h_i}_\pi$ has vertex set $\{ b(\tau, h_i, \pi) \mid \tau \in S_\pi\}$; and
    \item $B^{v_1}_\pi, B^{v_2}_\pi \dots, B^{v_r}_\pi$ where $B^{v_i}_\pi$ has vertex set $\{ b(\tau, v_i, \pi) \mid \tau \in S_\pi\}$.
  \end{compactitem}
  They cross in the middle, i.e., $B^{h_m}_\pi = B^{v_m}_\pi$ for $m = \frac{r+1}{2}$.

  We then add a~co-matching between any two consecutive blocks.
  Formally, for every $\tau, \tau'\in S_\pi$ with $\tau \neq \tau'$, and every $i\in [r-1]$, we add an edge 
\begin{compactitem}
    \item between $b(\tau, h_i, \pi)$ and $b(\tau', h_{i+1}, \pi)$; and 
    \item between $b(\tau, v_i, \pi)$ and $b(\tau', v_{i+1}, \pi)$. 
  \end{compactitem}

  Let us name the endpoints of the paths after the four directions \emph{up}, \emph{down}, \emph{left}, and \emph{right}:
  \[ 
  L_\pi  = B^{h_1}_\pi \qquad 
  R_\pi = B^{h_r}_\pi \qquad 
  U_\pi = B^{v_1}_\pi \qquad 
  D_\pi = B^{v_r}_\pi  
  \]

  We call these blocks the \emph{border blocks}, and any other block in~$C^r_{\pi}$ is called an \emph{interior block}.
  We construct $G^r_\S$ by taking the disjoint union of all the $C^r_{\pi}$ for $\pi \in [k]^2$, and adding edges between border blocks of adjacent cells.
  Specifically, we add an edge between 
  \begin{itemize}%\setcounter{enumi}{2}
    \item $b(\tau, h_r, \pi) \in R_\pi$ and $b(\tau', h_1, \pi_\blacktriangleright)\in L_{\pi_\blacktriangleright}$ if and only if $\tau$ and $\tau'$ have a different first component, for all $\pi \in [k-1] \times [k]$, and  %    for all $\pi \in [k-1] \times [k]$, $\tau \in S_\pi$, $\tau' \in S_{\pi_\blacktriangleright}$; and
    
    \item $b(\tau, v_r, \pi) \in D_\pi$ and $b(\tau', v_1, \pi_\triangledown)\in U_{\pi_\triangledown}$ if and only if $\tau$ and $\tau'$ have a different second component, for all $\pi \in [k] \times [k-1]$. %    for all $\pi \in [k] \times [k-1]$, $\tau \in S_\pi$, $\tau' \in S_{\pi_\triangledown}$.
\end{itemize}

  This concludes the construction of $G^r_\S$.
  Note that $|V(G^r_\S)| = \Oh(rn^2k^2)$.

\begin{figure} 
  \centering 
  \includegraphics[width= 0.9\textwidth]{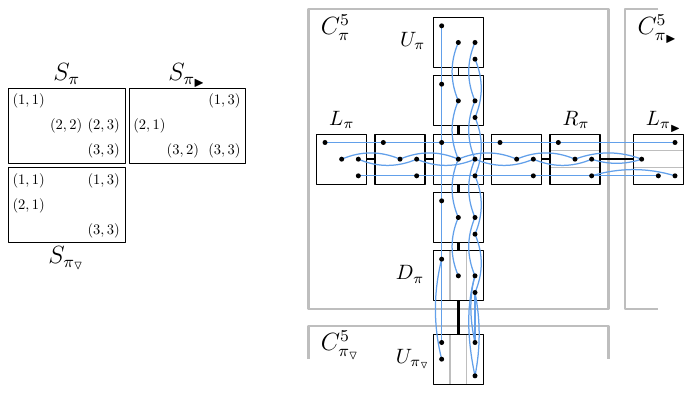}
  \caption{Construction of a gadget cell $C^5_\pi$ for a cell $\S_\pi$, and its edge-relation with the right and down gadget cells. Any set of vertices in a black box forms a clique. Two black boxes linked with a black thick edge form a biclique minus the edges in blue, which are non-edges. }\label{fig:gadgetcell}
\end{figure}

\begin{claim}\label{clm:grid_is_reduction}
    $(\S,k,n)$ is a positive \gridtiling instance if and only if $G^r_\S$ is a positive \textsc{$(2r-1)k^2$-Independent Set} instance.
\end{claim}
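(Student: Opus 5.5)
The proof has two directions: a counting argument on blocks, and a propagation argument along the co-matching paths. First we count the blocks. Each cell gadget $C^r_\pi$ has $2r-1$ distinct blocks, since its two paths of $r$ cliques share the middle block. There are $k^2$ cells, hence $(2r-1)k^2$ cliques in total. Every vertex of $G^r_\S$ lies in exactly one of them, so any independent set contains at most one vertex per block. Consequently, an independent set of size $(2r-1)k^2$ picks exactly one vertex from every block.

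\emph{Forward direction.} Given a solution $s$ of the \gridtiling instance, take $I := \{b(s(\pi), h_i, \pi), b(s(\pi), v_i, \pi) \mid \pi \in [k]^2, i \in [r]\}$; this has size $(2r-1)k^2$. We check independence edge type by edge type.
\begin{compactitem}
\item Within a block, $I$ has only one vertex.
\item Between consecutive blocks of the same gadget, the co-matching joins only vertices with different tile indices, whereas $I$ uses the same tile $s(\pi)$ throughout $C^r_\pi$.
\item Between $R_\pi$ and $L_{\pi_\blacktriangleright}$, edges join tiles with different first components. Since $s(\pi)$ and $s(\pi_\blacktriangleright)$ agree on their first component, no edge is induced. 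The case of $D_\pi$ and $U_{\pi_\triangledown}$ is symmetric, using second components.
\end{compactitem}

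\emph{Backward direction.} Let $I$ be an independent set of size $(2r-1)k^2$; by the counting above it has exactly one vertex in each block. Suppose the vertices chosen in two consecutive blocks $B^{h_i}_\pi$ and $B^{h_{i+1}}_\pi$ are $b(\tau, h_i, \pi)$ and $b(\tau', h_{i+1}, \pi)$. If $\tau \ne \tau'$, the co-matching makes them adjacent, so independence forces $\tau = \tau'$. The same holds along the vertical path.

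Propagating along both paths, which share the middle block, shows that every block of $C^r_\pi$ uses a single common tile, which we define to be $s(\pi) \in S_\pi$. The chosen vertices of $R_\pi$ and $L_{\pi_\blacktriangleright}$ are then $b(s(\pi), h_r, \pi)$ and $b(s(\pi_\blacktriangleright), h_1, \pi_\blacktriangleright)$. These are non-adjacent, which by construction means $s(\pi)$ and $s(\pi_\blacktriangleright)$ share their first component. The vertical constraint between $D_\pi$ and $U_{\pi_\triangledown}$ follows in the same way, so $s$ is a valid \gridtiling solution.

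The only step needing care is the propagation through the co-matchings together with the crossing at the middle block. The point is that the identification $B^{h_m}_\pi = B^{v_m}_\pi$ ties the horizontal and vertical tile choices to the same $s(\pi)$. Without it, the horizontal path could select a tile different from the vertical one, and the reduction would fail. Once this is handled, everything else is direct verification of adjacencies.
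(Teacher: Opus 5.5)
Your proof is correct and follows essentially the same route as the paper. The forward direction uses the same independent set of copies of $s(\pi)$. The backward direction counts one vertex per clique, uses the co-matchings and the shared middle block to force a single tile per cell, and reads the \textsc{Grid Tiling} constraints off the border-block non-edges; you spell out this co-matching propagation more explicitly than the paper, which simply asserts the resulting form of $I_\pi$.
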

\begin{claimproof}
    Assume there is a~solution for $(\mathcal{S},k,n)$ using tiles $s(\pi) \in S_\pi$ for every $\pi \in [k]^2$.
    We claim that   
    \[
        I = \bigcup_{\pi \in [k]^2} I_\pi
        \quad
        \text{with}
        \quad
        I_\pi := \{b(s(\pi), h_i, \pi), b(s(\pi), v_i, \pi) \mid   i\in[r]\}
    \]
    is an independent set of size $(2r-1)k^2$.
    Clearly, $I$ has the desired size.
    For each cell $\pi$, $I_\pi$ is an independent set since it comprises only copies of the same tile $s(\pi)$.
    By construction, two vertices in distinct sets $x \in I_\pi$ and $y \in I_{\pi'}$ can only be adjacent if, up to symmetry, either
    \begin{itemize}
        \item $\pi' = \pi_\blacktriangleright$, and
        $x \in R_\pi$ and 
        $y \in L_{\pi_\blacktriangleright}$
        have a different first component; or

        \item $\pi' = \pi_\triangledown$, and
        $x \in D_\pi$ and 
        $y \in U_{\pi_\triangledown}$
        have a different second component.
    \end{itemize}
    Both possibilities are ruled out due to the $s(\pi)$ forming a solution to \textsc{Grid Tiling}. Hence, $I$ is indeed an independent set.

    Conversely, assume there exists an independent set $I$ of size $(2r-1)k^2$.
    Since we constructed the $k^2$ graphs $C^r_\pi$ from $2r-1$ cliques each, 
    $I$ must contain exactly $2r-1$ vertices from each $C^r_\pi$.
    Since those $2r-1$ vertices are pairwise non-adjacent, 
    they must be of the form
    \[
      I_\pi := \{b(s(\pi), h_i, \pi), b(s(\pi), v_i, \pi) \mid i\in[r]\}
    \]
    for some common $s(\pi) \in S_\pi$.
    We now argue that the tiles $s(\pi)$ form a \textsc{Grid Tiling} solution for $(\mathcal{S},k,n)$.
    For every $\pi \in [k-1] \times [k]$, we have that $b(s(\pi), h_r, \pi) \in R_\pi$ and $b(s(\pi)_\blacktriangleright, h_1, \pi_\blacktriangleright)\in L_{\pi_\blacktriangleright}$ are part of the same independent set $I$ and hence non-adjacent.
    By construction, we must have that $s(\pi)$ and $s(\pi_\blacktriangleright)$ agree on the first component.
    Analogously, for every $\pi \in [k] \times [k-1]$, the tiles $s(\pi)$ and $s(\pi_\triangledown)$ agree on the second component.
    This concludes the proof of the correctness of the reduction.
\end{claimproof}

We now upper-bound the radius-$(r-1)$ merge-width of the produced graphs.

\begin{claim}\label{clm:bounded_mw}
  $\mw_{r-1}(G^r_\S)\leq 4r-3$. 
\end{claim}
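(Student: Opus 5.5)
The plan is to build an explicit merge sequence in which every \emph{interior} block $B^{x}_\pi$ eventually becomes a single part. Let $M$ be the set of all ``twin'' pairs $b(\tau,x,\pi)\,b(\tau,y,\pi)$ with $x,y$ consecutive blocks of the horizontal or vertical path of $C^r_\pi$. These are exactly the non-edges of the co-matchings. We put $M$ into the resolved set from the very first step.

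\textbf{Structure of $(V,M)$.} For a fixed $\tau \in S_\pi$, the pairs of $M$ involving the copies of $\tau$ form a ``cross'': two paths of $r$ vertices sharing the middle vertex $b(\tau,h_m,\pi)$, so $2r-1$ vertices in total. Hence every component of $(V,M)$ is such a cross, and any radius-$(r-1)$ ball in $(V,M)$ has at most $2r-1$ vertices, whatever the partition. Modulo $M$, a pair of distinct vertices inside one cell is an edge exactly when the two vertices lie in the same block or in consecutive blocks. So merging all vertices of a block into one part, one vertex at a time, keeps homogeneity modulo $M$ for every pair of cell-internal parts. The first phase therefore merges every interior block into a single part, in any order. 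During this phase the radius-$(r-1)$ width is at most $2r-1$.

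\textbf{Border blocks.} The only non-homogeneity left comes from the pairs between $R_\pi$ and $L_{\pi_\blacktriangleright}$, and between $D_\pi$ and $U_{\pi_\triangledown}$. These encode an arbitrary bipartite relation and must be resolved before the corresponding border blocks are merged. I would treat each adjacent pair of cells, say $(R_\pi, L_{\pi_\blacktriangleright})$, as follows:
\begin{enumerate}
\item Merge $L_{\pi_\blacktriangleright}$ into one part incrementally. Just before a vertex $w\in L_{\pi_\blacktriangleright}$ joins the growing part, add to the resolved set all pairs between $w$ and $R_\pi$.
\item Symmetrically handle $R_\pi$, whose partners now lie in a single merged part.
\end{enumerate}
The point of the incremental order is that a vertex of $R_\pi$ reached through a border pair only ever sees, on the other side, the already merged part of $L_{\pi_\blacktriangleright}$ together with at most the one singleton currently being added.

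\textbf{Width count.} A radius-$(r-1)$ ball now consists of at most two crosses, one on each side of a border, joined by border pairs. The remaining steps are:
\begin{enumerate}
\item Check that, because the border is at the end of both crosses, the radius budget $r-1$ from the starting vertex splits between the two crosses. The total number of parts met is then at most $(2r-1)+(2r-1)-1 = 4r-3$.
\item Check that merged blocks are counted once each and that the index shift $(\P_t,R_{t+1})$ costs nothing extra.
\item Verify that the blocks $D_\pi,U_{\pi_\triangledown}$, and the central block where the two paths meet, do not let the ball pick up a third cross or both border directions at full depth. This is plausible since $r$ is odd and the crossing is at distance $(r-1)/2$ from every border.
\end{enumerate}
Finally, once all blocks are single parts and all inter-block pairs are resolved, we merge everything into one part. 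Resolving the remaining pairs between merged blocks costs at most one part per block in any ball, well under the bound.

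\textbf{Main obstacle.} The delicate step is the second phase: the order of adding border resolved pairs and merging border blocks. The border relation is arbitrary, so a single unmerged border block reachable in the resolved graph would contribute up to $|S_\pi|$ singleton parts to a ball. I would first try to process the cells in a fixed left-to-right, top-to-bottom order. If the per-vertex accounting exceeds $4r-3$, the fallback is to add the border pairs of $w$ only after the crosses adjacent to $w$ are contracted, and re-count.
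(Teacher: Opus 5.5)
Your first phase is sound. With only the twin pairs $M=N_{in}$ resolved, every ball stays inside one cross of $2r-1$ vertices, and merging interior blocks keeps homogeneity. The gap is in the border phase, at its very first step.

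\textbf{Why the border phase fails.} When you resolve all pairs between $w\in L_{\pi_\blacktriangleright}$ and $R_\pi$, the block $R_\pi$ still consists of $|S_\pi|$ singleton parts. So already the radius-$1$ ball around $w$ meets an unbounded number of parts of $\P_t$. Moreover, as soon as one such $w$ is processed, every vertex of $R_\pi$ reaches all of $R_\pi$ at distance $2$ through $w$. Your justification (``a vertex of $R_\pi$ only sees the merged part on the other side'') overlooks that it also sees its own block's singletons through that part.

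Resolving only the non-edges at $w$ does not help either. Then $w$ sees every tile of $S_\pi$ with the same first component, up to $n$ singletons.

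Your fallback does not repair this. Distinct crosses of one cell are not homogeneous modulo $M$: $b(\tau,h_1,\pi)b(\tau',h_1,\pi)$ is an edge while $b(\tau,h_1,\pi)b(\tau',h_3,\pi)$ is not, and neither pair is resolved.

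In short, your scheme only allows a border block to be all singletons or a single part. In neither state can its border pairs be resolved at bounded width. Your width picture is also off: a radius-$(r-1)$ ball is not confined to two crosses. From a middle block it enters all four neighbouring cells and meets many tiles there.

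\textbf{The missing idea.} The border relation is not arbitrary: $b(\tau,h_r,\pi)$ and $b(\tau',h_1,\pi_\blacktriangleright)$ are adjacent if and only if $\tau,\tau'$ have different first components. So split each border block into \emph{layers} by the relevant coordinate: the first coordinate for $L_\pi,R_\pi$ and the second for $U_\pi,D_\pi$. Then the relation between two facing layers is complete or empty.

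The paper's sequence uses exactly this intermediate state:
\begin{enumerate}
\item With $N_{in}$ resolved, merge each interior block and each layer into a part. This is homogeneous without resolving any border pair.
\item Only then resolve the border non-edges $N_{out}$. These join a vertex only to the facing layer, which is already a single part.
\item Merge whole blocks, and finally everything.
\end{enumerate}

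The width bound rests on one fact. Two vertices in different layers of the same border block are at distance at least $2r$ in $N_{in}\cup N_{out}$, because changing the layer coordinate forces a detour through a different border of the cell and back. Hence a radius-$(r-1)$ ball meets at most one layer per border block. Counting blocks then gives $(2r-1)+4\cdot\frac{r-1}{2}=4r-3$. This is attained by a ball centred in a middle block, which reaches $\frac{r-1}{2}$ blocks into each of the four neighbouring cells.
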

\begin{claimproof}
  We will construct a merge sequence for $G := G^r_\S$  of radius-$(r-1)$ width $4r-3$.
  First, we define the following sets:
  \begin{itemize}
  \item Let $N_{in}$ be the set of all the non-edges between consecutive copies of $B_\pi$ inside a cell gadget $C^r_\pi$ for all $\pi\in [k]^2$.
    These are the blue edges within the same cell gadget in~\cref{fig:gadgetcell}.
  \item Let $N_{out}$ be the set of all the non-edges between adjacent border blocks of adjacent cell gadgets in~$G$.
    These are the blue edges between two cell gadgets in~\cref{fig:gadgetcell}.
    \item Define the \emph{layers} of each border block: \\
    $L_\pi^i := \{b(\tau, h_1, \pi) \mid \tau \in S_\pi, \tau = (i,\_)\}$, \qquad 
    $R_\pi^i := \{b(\tau, h_r, \pi) \mid \tau \in S_\pi, \tau = (i,\_)\}$,\\
    $U_\pi^i := \{b(\tau, v_1, \pi) \mid \tau \in S_\pi, \tau = (\_,i)\}$,  \qquad
    $D_\pi^i := \{b(\tau, v_r, \pi) \mid \tau \in S_\pi, \tau = (\_,i)\}$.
  \end{itemize}

  We can define the following merge sequence.
  We use the letter $F$ for the set of resolved pairs, since $R$ is taken by the right border blocks.
  \begin{enumerate}
    \item Let $\P_1$ be the partition of $V(G)$ into singletons and $F_1 := N_{in}$.
    \item The parts of $\P_2$ are each interior block of each cell gadget, and for each $i \in [n]$ and $X \in \{L,R,U,D\}$, each layer $X^i_\pi$.
      We keep $F_2 := F_1$.
    \item The parts of~$\P_3$ are each block of each cell gadget. We set $F_3 := F_2 \cup N_{out}$.
    \item We finally let $\P_4 := \{V(G)\}$ and $F_4 := F_3 \cup E(G)$.
  \end{enumerate}
  By construction $\P_i$ is coarser than $\P_{i-1}$, $F_i\subseteq F_{i+1}$, and $\P_i$ is homogeneous modulo $F_i$.
  We now argue that the radius-$(r-1)$ width of the sequence \[(\P_1,F_1), (\P_2,F_2), (\P_3,F_3), (\P_4,F_4)\] is at most $4r-3$.

  The radius-$(r-1)$ (in fact the radius-$\infty$) width of $(\P_1,F_2=N_{in})$ is at most $2r-1$ since every connected component of $(V(G),F_2)$ has exactly $2r-1$ vertices.

  %For $(\P_1,F_2)$, since $F_2 = F_1 = N_{in}$, for any $v\in V(G)$, in the cell gadget $C^r_\pi$ there are exactly $2r-1$ vertices in $\Ball^{r-1}_{F_2}(v)$, that is, $v$ plus the $2r-2$ other copies of the same tile of $S_\pi$.

  For $(\P_2,F_3)$, observe that for any vertices $x$ and $y$ in the same border block $B_\pi$ of $C^r_\pi$ but in different layers, $\dist_{F_3}(x,y)\geq 2r$.
  Indeed, if there is a path of length at most $2r$ between them, then it must be of the form 
  $x,\dots,x',z,y',\dots,y$ where $x'$ and $y'$ are the respective copies of $x$ and $y$ in another border block $B'_\pi$ of $C^r_\pi$, and $z$ is a vertex in the border block adjacent to $B'_\pi$ such that $x'z,y'z \in N_{out}$. 
  Since $\dist_{F_3}(x,x')= \dist_{F_3}(x,y')=r-1$, it follows that $\dist_{F_3}(x,y)\geq 2r$.
  Therefore, for any $v\in V(G)$, the set $\Ball^{r-1}_{F_3}(v)$ intersects at most one layer from each border block. 
  Since each layer of a border block and each interior block is in one part of $\P_2$, it follows that for any $v\in V(G)$, we have $|\Ball^{r-1}_{F_3}(v)/\P_2| \leq 4r-3$.

  Finally, for $(\P_3,F_4)$, the edges of $G$ are only between consecutive blocks inside a cell gadget, or between adjacent border blocks. Since any block of $G$ is in one part of $\P_3$, for any $v\in V(G)$, we have $|\Ball^{r-1}_{F_4}(v)/\P_3| \leq 4r-3$. This concludes the proof.
\end{claimproof}

These claims readily give that \textsc{$k$-IS} is W[1]-hard on graphs of bounded radius-$r$ merge-width, where $r$ is a~fixed constant.
However, we can extract more. %: For any functions $f(k) = o(k)$ and $g(k)$ with $g(k) \ge 4k$, solving \textsc{$k$-IS} on graphs of $\mw_{f(k)} \le g(k)$ is W[1]-hard.
Fix two computable functions $f(x) = o(x)$ and $g(x) \ge 4x$.
Then we say that $G$ has \emph{$(f, g, k)$-merge-width} if $\mw_{f(k)}(G) \le g(k)$.
%Consider the problem \textsc{$k$-IS on $(f, g)$-merge-width} that takes as input a parameter $k$, and a graph $G$ such that $\mw_{f(k)}(G) \le g(k)$.
\begin{claim}\label{clm:w1-hardness}
  \kis on graphs with $(f, g, k)$-merge-width is W[1]-hard.
\end{claim}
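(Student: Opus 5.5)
We give a parameterized reduction from \gridtiling (parameter $k_0$, the grid size) to \kis on graphs with $(f,g,k)$-merge-width, reusing the graphs $G^r_\S$ built above with $r$ chosen as a function of $k_0$ only. Set the \kis parameter $k := (2r-1)k_0^2$. By \cref{clm:grid_is_reduction}, $(\S,k_0,n)$ is positive if and only if $G^r_\S$ has an independent set of size $k$. By \cref{clm:bounded_mw}, $\mw_{r-1}(G^r_\S) \le 4r-3$. Since $\mw_{r'}$ is monotone non-decreasing in the radius $r'$ (a ball of smaller radius is contained in one of larger radius), we also get $\mw_{r'}(G^r_\S) \le 4r-3$ for every $r' \le r-1$.

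It therefore suffices to choose an odd $r \ge 3$, depending only on $k_0$, such that
\[ f\bigl((2r-1)k_0^2\bigr) \le r-1 \quad\text{and}\quad 4r-3 \le g\bigl((2r-1)k_0^2\bigr). \]
The second inequality always holds, because $g(x) \ge 4x$ gives $g\bigl((2r-1)k_0^2\bigr) \ge 4(2r-1) \ge 4r-3$. For the first, we use $f(x)=o(x)$: there is some $x_0$ such that $f(x) \le x/(4k_0^2)$ for all $x \ge x_0$. Taking $x=(2r-1)k_0^2$ then gives $f(x) \le (2r-1)/4 \le r-1$ whenever $r \ge 2$.

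Computability has to be handled carefully, and this is where I expect the main obstacle. The reduction must find such an $r$ algorithmically. Knowing only that $f(x)=o(x)$ does not give a computable threshold $x_0$. We avoid needing one by searching directly: enumerate odd $r = 3,5,7,\dots$ and stop at the first $r$ with $f\bigl((2r-1)k_0^2\bigr) \le r-1$. Each test is decidable because $f$ is computable, and the search terminates by the $o(x)$ argument above. The resulting $r =: h(k_0)$ is a computable function of $k_0$ alone, so the new parameter $k = (2h(k_0)-1)k_0^2$ is bounded by a computable function of $k_0$.

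It remains to check the running time. The construction takes time $\Oh(\text{time to compute } h(k_0)) + \mathrm{poly}(h(k_0)\, n^2 k_0^2)$, and $|V(G^r_\S)| = \Oh(r n^2 k_0^2)$. This is an FPT-time parameterized reduction. Together with the W[1]-hardness of \gridtiling~\cite{Marx07}, it yields W[1]-hardness of \kis on graphs with $(f,g,k)$-merge-width.
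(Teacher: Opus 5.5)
Your proposal is correct and follows essentially the same route as the paper. You take the least odd $r \ge 3$ with $f\bigl((2r-1)k_0^2\bigr) \le r-1$ (it exists since $f(x)=o(x)$, and is computable by search since $f$ is computable), set $k := (2r-1)k_0^2$, chain radius-monotonicity of merge-width with \cref{clm:bounded_mw} and $g(x)\ge 4x$, and conclude with \cref{clm:grid_is_reduction}; your explicit $\varepsilon$-threshold argument only spells out the existence step that the paper states in one line.
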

\begin{claimproof}
%We show an FPT reduction from \gridtiling to \textsc{$k$-IS on $(f, g)$-merge-width}.
  %For any integer $k$,
  We design an FPT reduction from \gridtiling with parameter $q$ to \kis with $k := (2r - 1)q^2$, where $r := r(q)$ is the smallest odd integer at~least equal~to $3$ such that \[f((2r-1)q^2) \le r-1.\]
This integer exists since $f(x) = o(x)$, and is computable since $f$ is computable.

Let $(\S,q,n)$ be an instance of \gridtiling.
Computing $G := G^r_{\S}$ can be done in time $\Oh(\lVert G \rVert) = \Oh(r^2q^4 \cdot n^4)$.
This is an FPT reduction in parameter $q$ since $r$ depends only on~$q$.
Besides, $G$ has $(f, g, k)$-merge-width, since 
\[
\mw_{f(k)}(G) = \mw_{f((2r - 1)q^2)}(G)
\le \mw_{r - 1}(G)
\le 4r - 3 
\le 4(2r - 1)q^2
\le g((2r - 1)q^2) = g(k).
\]
Where the first inequality comes from monotonicity of merge-width with respect to the radius, the second is due to \Cref{clm:bounded_mw}, and the last, by the assumption on~$g$.
We conclude by~\cref{clm:grid_is_reduction}.
%Moreover, by \Cref{clm:grid_is_reduction} $(\S,q,n)$ is a positive instance of \gridtiling if and only if $G$ is a positive instance of \kis on graphs with $(f, g, k)$-merge-width.
\end{claimproof}

In particular, for every $r=o(k)$, \textsc{$k$-IS} is W[1]-hard on graphs of bounded $\mw_r$.
\end{proof}

\section{\kds in graphs of bounded low-radius merge-width}\label{sec:w1-hardDS}

In this section, we adapt the reduction from the previous section to show the W[1]-hardness of \kds in graphs of low-radius merge-width.
The main difference is how the selection of copies of the same tile in a~cell gadget is enforced. 
First, instead of cliques, the blocks will be independent sets. Then, we will achieve the vertex selection by inserting between every pair of consecutive blocks $B$ and $B'$, an additional block $A$ such that there is a co-matching between $B$ and $A$, and a matching between $A$ and $B'$. The only way to dominate this new block $A$ with two vertices will be to pick copies of the same tile in $B$ and $B'$. A similar construction can be done to enforce that selected tiles in adjacent cells agree on their corresponding component (see \cref{fig:gadgetcell_ds}).

\begin{reptheorem}{thm:main-paramDS}
  \kds is W[1]-hard in graphs of bounded $\mw_{o(k)}$.
\end{reptheorem}

\begin{proof}
  Let $r>3$ be an odd integer, for a \gridtiling instance $(\S, k, n)$ we will construct a graph $G^r_\S$ with $\mw_{2r-2}(G^r_\S)= \Oh(r)$.  
  For each grid cell $\pi \in [k]^2$, we construct a cell gadget $C^r_{\pi}$.  
  The cell gadget $C^r_{\pi}$ consists of a~horizontal path and a~vertical path of $2r-1$ edgeless graphs:
  \begin{compactitem}
    \item $B^{h_1}_\pi,B^{h_2}_\pi, \dots, B^{h_{2r-1}}_\pi $ where $B^{h_i}_\pi$ have  vertex set $\{ b(\tau, h_i, \pi) \mid \tau \in S_\pi\}$; and
    \item $B^{v_1}_\pi, B^{v_2}_\pi, \dots, B^{v_{2r-1}}_\pi$ where $B^{v_i}_\pi$ have  vertex set $\{ b(\tau, v_i, \pi) \mid \tau \in S_\pi\}$.
  \end{compactitem}
  They are crossing in the middle, i.e., $B^{h_r}_\pi = B^{v_r}_\pi$.

  For every even block we add a co-matching towards the previous block and a matching towards the next block.
  That is, for $\tau, \tau'\in S_\pi$ with $\tau \neq \tau'$ and even $i\in[2r-1]$, add to $C^r_{\pi}$ edges between 
  \begin{itemize}
    \item $b(\tau, h_i, \pi)$ and $b(\tau', h_{i-1}, \pi)$; \quad $b(\tau, h_i, \pi)$ and $b(\tau, h_{i+1}, \pi)$;  
    \item $b(\tau, v_i, \pi)$ and $b(\tau', v_{i-1}, \pi)$; \quad $b(\tau, v_i, \pi)$ and $b(\tau, v_{i+1}, \pi)$. 
  \end{itemize}

  Additionally, for each odd $i\in[2r-1]$, add 
  \begin{itemize}
    \item vertices $a_1(h_i,\pi)$ and $a_2(h_i,\pi)$ complete to $B^{h_i}_\pi$; and
    \item vertices $a_1(v_i,\pi)$ and $a_2(v_i,\pi)$ complete to $B^{v_i}_\pi$.
  \end{itemize}

  Note that since $r$ is odd, a pair of such vertices is created for $B^{h_r}_\pi = B^{v_r}_\pi$. This is done to guarantee that a vertex is picked in each odd block.
  The endpoints of the paths are called \emph{border blocks}, and any other block in~$C^r_{\pi}$ is called an \emph{interior block}.
  
  % after the four directions \emph{up}, \emph{right}, \emph{down}, and \emph{left}:
  % \[ 
  % U_\pi = B^{h_1}_\pi \qquad 
  % D_\pi = B^{h_{2r-1}}_\pi  \qquad 
  % L_\pi  = B^{v_1}_\pi \qquad 
  % R_\pi = B^{v_{2r-1}}_\pi
  % \]

  We construct $G^r_\S$ by taking the disjoint union of all the $C^r_{\pi}$ for $\pi \in [k]^2$.
  Additionally, to ensure that the same first or second component is selected between adjacent gadget cells, we add the following vertices and edges.
  \begin{itemize}      
      \item For each $\pi \in [k-1] \times [k]$, 
  add a set of vertices $\{r(i,\pi)\mid i\in [n]\}$ and edges 
  between  $b((i,\_), h_{2r-1}, \pi) \in B^{h_{2r-1}}_\pi$ 
  and $r(j,\pi)$ for $i, j\in [n]$ with $i \neq j$; 
  and $b((i,\_), h_{1}, \pi_\blacktriangleright) \in B^{h_1}_{\pi_\blacktriangleright}$ and $r(i,\pi)$ for $i\in [n]$.
   
  \item For each $\pi \in [k] \times [k-1]$, 
  add a set of vertices $\{d(i,\pi)\mid i\in [n]\}$ and edges 
  between  $b((\_,i), v_{2r-1}, \pi) \in B^{v_{2r-1}}_\pi$ and $d(j,\pi)$ for $i\neq j\in [n]$; 
  and $b((\_,i), v_{1}, \pi_\triangledown) \in B^{v_{1}}_{\pi_\triangledown}$ and $d(i,\pi)$ for $i\in [n]$.
    \end{itemize}
This concludes the construction of $G^r_\S$ and note that $|V(G^r_\S)| = \Oh(rn^2k^2)$.

\begin{figure} 
  \centering 
  \includegraphics[width= 0.8\textwidth]{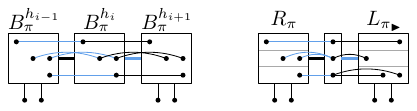}
  \caption{Left: Co-matching and matching between consecutive blocks in a gadget cell with $i$ even. Right: construction between two border blocks of adjacent cells.
  Any set of vertices in a black box forms an edgeless graph. Vertices incident to a~black box are complete to the set of vertices contained in that box.
  }\label{fig:gadgetcell_ds}
\end{figure}

The following claims can be shown similarly to their counterparts in \cref{thm:main-paramIS}.
For the equivalence of \gridtiling and \kds instances, it follows from the fact that the only dominating sets of size $2r-1$ of a gadget cell $C_\pi^r$ are of the form
  \[
        \text{Dom}_\pi := 
        \{b(\tau, h_{2i-1}, \pi), b(\tau, v_{2i-1}, \pi) \mid   i\in[r]\}
    \]
   for  some $\tau \in S_\pi$. Moreover, the vertices added between adjacent gadget cells force any dominating set to select the corresponding component in a~dominating set of size $(2r-1)k^2$.

\begin{claim}\label{clm:grid_ds_reduction}
    $(\S,k,n)$ is a positive \gridtiling instance if and only if 
    $G^r_\S$ is a positive $(2r-1)k^2$-DS instance.
\end{claim}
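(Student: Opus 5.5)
We prove the two directions separately, mirroring \cref{clm:grid_is_reduction}. Throughout, call $A_\pi := \{a_1(h_i,\pi),a_2(h_i,\pi),a_1(v_i,\pi),a_2(v_i,\pi) \mid i \text{ odd}\}$ the \emph{pendant pairs} of~$C^r_\pi$. There are $2r-1$ distinct odd blocks per cell, since $B^{h_r}_\pi=B^{v_r}_\pi$, so $C^r_\pi$ has $2r-1$ pairs of pendant-like vertices. The vertices $a_1,a_2$ of an odd block $B$ are non-adjacent to each other, and each has neighborhood exactly~$B$. Hence any dominating set $D$ either contains a vertex of~$B$, or contains both $a_1$ and $a_2$ of~$B$. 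In either case $D$ spends at least one vertex on $B\cup\{a_1,a_2\}$, and these sets are disjoint over all odd blocks of all cells. So $|D|\geq (2r-1)k^2$.

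\textbf{Forward direction.} Given a \gridtiling solution $s$, we take $D := \bigcup_\pi \text{Dom}_\pi$ with $\tau=s(\pi)$, which has size $(2r-1)k^2$. We check domination vertex by vertex. Pendant vertices $a_j$ are dominated by the chosen vertex of their odd block. Odd-block vertices are dominated via the $a$'s? No: odd blocks are independent, so a non-chosen $b(\tau',h_i,\pi)$ with $i$ odd must be dominated from an adjacent even block. Here I must recheck the construction. The even block $h_i$ is co-matched to $h_{i-1}$ and matched to $h_{i+1}$, and odd blocks have no other neighbors besides their~$a$'s and the border gadgets. So non-chosen odd-block vertices are \emph{not} dominated by $D$ unless the $a_j$ are in~$D$. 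This suggests that the intended reading is that the even blocks are the dominating targets, and the odd-block vertices are dominated through a different mechanism. If the paper's construction really leaves odd vertices undominated, I would repair the argument by observing that $a_1,a_2$ twins make $B$ behave as needing selection, and check carefully. Concretely, I expect the fix to be that each even vertex $b(\tau',h_i,\pi)$, $i$ even, is dominated: it is adjacent to $b(\tau,h_{i-1},\pi)$ if $\tau'\neq\tau$, and to $b(\tau,h_{i+1},\pi)$ if $\tau'=\tau$. Vertices $r(j,\pi)$ with $s(\pi)=(i,\_)$ are dominated by $b(s(\pi),h_{2r-1},\pi)$ when $j\neq i$. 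When $j=i$, they are dominated by $b(s(\pi_\blacktriangleright),h_1,\pi_\blacktriangleright)$, whose first component is $i$ by the tiling condition. The same holds symmetrically for $d(j,\pi)$.

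\textbf{Backward direction.} From $|D|=(2r-1)k^2$ and the counting above, $D$ picks exactly one vertex in each set $B\cup\{a_1,a_2\}$ for odd blocks $B$, and nothing else. In particular, $D$ contains no even-block vertex and no $r$- or $d$-vertex, and each chosen vertex must lie in $B$ itself, since one pendant alone does not dominate its twin. Let $b(\tau_i,h_i,\pi)$ be the choice in odd block~$i$. The even block $h_i$ between them ($i$ even) has neighbors in $D$ only through $b(\tau_{i-1},h_{i-1},\pi)$, which dominates all but $b(\tau_{i-1},h_i,\pi)$, and through $b(\tau_{i+1},h_{i+1},\pi)$, which dominates only $b(\tau_{i+1},h_i,\pi)$. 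Hence $\tau_{i+1}=\tau_{i-1}$, and propagating along both paths through the shared middle block gives a common $s(\pi)$. Finally, $r(j,\pi)$ with $j$ the first component of $s(\pi)$ is dominated only by a vertex of $B^{h_1}_{\pi_\blacktriangleright}$ with first component $j$, which forces agreement of first components. The argument for $d(j,\pi)$ and second components is symmetric.

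The main obstacle is the domination of non-selected odd-block vertices in the forward direction. I would first verify the precise adjacencies of odd blocks in the construction. If needed, I would argue that these vertices are covered by the adjacent even-block structure, or else adapt the size bound to account for it.
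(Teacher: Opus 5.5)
\textbf{Gap in the forward direction.} You found the right weak spot, but your proposal leaves it open, and it cannot be closed for the construction as written. Since the blocks are edgeless, a vertex $b(\tau',h_i,\pi)$ with $i$ odd and $\tau'\neq s(\pi)$ has neighbors only among:
\begin{itemize}
\item the $a$-vertices of its block,
\item the two adjacent even blocks, and
\item for border blocks, the connector vertices $r(\cdot,\cdot)$ and $d(\cdot,\cdot)$.
\end{itemize}
None of these lies in $\bigcup_\pi \text{Dom}_\pi$.

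Your own counting argument shows more. A dominating set $D$ of size $(2r-1)k^2$ has exactly one vertex in each set $B\cup\{a_1,a_2\}$, and that vertex lies in $B$. Every other vertex of $B$ then has no neighbor in $D$. So as soon as some $|S_\pi|\geq 2$, $G^r_\S$ has no dominating set of size $(2r-1)k^2$, and the forward direction is false. A concrete instance is $k=1$ with $|S_{(1,1)}|=2$, which is a trivial yes-instance of \gridtiling.

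Neither of your fallback repairs can work. The tight count forbids even-block vertices and $r$-, $d$-vertices in $D$, so they cannot cover the odd blocks. Enlarging the budget destroys the exactly-one-per-odd-block structure on which your backward direction rests.

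\textbf{The missing idea.} The gadget itself must change: the odd (selection) blocks must be cliques, with $a_1,a_2$ kept non-adjacent and complete to their block. This is the standard ``clique plus two private guards'' selection gadget. The paper's own one-line justification tacitly needs exactly this, since it asserts that the sets $\text{Dom}_\pi$ dominate $C^r_\pi$. That assertion fails for edgeless odd blocks for the reason above.

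With that change, your outline becomes a complete proof along the lines the paper intends:
\begin{itemize}
\item A selected vertex dominates its clique and its two guards.
\item Your domination checks for even-block vertices and for $r(j,\pi)$, $d(j,\pi)$ are correct.
\item Your backward argument goes through verbatim. A lone $a_j$ still leaves its twin undominated. Each even block $h_i$ forces $\tau_{i-1}=\tau_{i+1}$. The vertex $r(j,\pi)$, with $j$ the first component of $s(\pi)$, forces agreement with $s(\pi_\blacktriangleright)$.
\end{itemize}
A clique block is still a homogeneous part, just as in \cref{sec:w1-hard} where the blocks are cliques. So this modification should not affect the merge-width argument of \cref{clm:bounded_mw_ds}.
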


Then, for the bound on the radius-$(2r-2)$ merge-width, the merge sequence is done similarly: first resolve all matchings and co-matchings inside gadget cells; then resolve the co-matchings and the matchings with the vertices between adjacent gadget cells; then add every remaining edge of $G^r_\S$.

\begin{claim}\label{clm:bounded_mw_ds}
  $\mw_{2r-2}(G^r_\S)= \Oh(r)$.
\end{claim}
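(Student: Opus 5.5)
We mimic the merge sequence of \cref{clm:bounded_mw}, with four steps. Let $M_{in}$ be the set of all pairs inside cell gadgets between an even block and its two neighbouring odd blocks. These pairs comprise both the edges of the matchings and the non-edges of the co-matchings, restricted to ``same-tile'' pairs. More precisely, we take the co-matching non-edges $b(\tau,h_i,\pi)b(\tau,h_{i-1},\pi)$ and the matching edges $b(\tau,h_i,\pi)b(\tau,h_{i+1},\pi)$, and similarly for the vertical path. Let $M_{out}$ be the set of pairs $b((i,\_),h_{2r-1},\pi)\,r(i,\pi)$ (non-edges of the co-matching side) and $b((i,\_),h_1,\pi_\blacktriangleright)\,r(i,\pi)$ (matching edges), together with their vertical analogues. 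The sequence is as follows:
\begin{enumerate}
\item $\P_1$ consists of singletons, and $F_1 := M_{in}$.
\item $\P_2$ has as parts each interior block, each layer of each border block, each set $\{r(i,\pi)\}$ and $\{d(i,\pi)\}$ kept as singletons by first component, and each pair $\{a_1(\cdot),a_2(\cdot)\}$. We keep $F_2:=F_1$.
\item $\P_3$ has as parts whole blocks, the sets $\{r(i,\pi)\mid i\in[n]\}$ and $\{d(i,\pi)\mid i\in[n]\}$, and the $a$-pairs. We set $F_3:=F_2\cup M_{out}$.
\item $\P_4:=\{V(G)\}$ and $F_4:=F_3\cup E(G)$.
\end{enumerate}

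\textbf{Homogeneity.} Within $\P_2$, two blocks are homogeneous modulo $M_{in}$: between a co-matched pair of blocks every non-resolved pair is an edge, and between a matched pair every non-resolved pair is a non-edge. Two layers $X^i$ and $X^j$ of a border block are homogeneous toward a singleton $r(\ell,\pi)$, since each layer is entirely adjacent or entirely non-adjacent to it according to whether its index equals $\ell$. The $a$-vertices are complete to whole odd blocks, and all other pairs of parts are anticomplete. In $\P_3$, resolving $M_{out}$ makes a full border block homogeneous to the full $r$-set: the remaining pairs are all edges on the co-matching side and all non-edges on the matching side.

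\textbf{Width bounds.} For $(\P_1,F_2)$, each component of $(V,M_{in})$ is a single tile-copy path of $2(2r-1)-1=4r-3$ vertices, which gives width $\Oh(r)$. For $(\P_3,F_4)$, the graph between parts of $\P_3$ is a bounded-degree ``quotient''. Each block touches at most two neighbouring blocks, one $a$-pair, and at most one $r$- or $d$-set. However, a radius-$(2r-2)$ ball in the full graph may leave the cell, so we must count parts more carefully. The ball reaches every vertex at quotient distance at most $2r-2$. In the quotient, each cell is a cross with arms of length $\Oh(r)$, and the arms are linked through $r$/$d$ connectors. A ball of radius $2r-2$ therefore meets only $\Oh(1)$ cells' arms, each over $\Oh(r)$ blocks. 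This yields $\Oh(r)$ parts, with a larger constant than $4r-3$, which is fine.

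\textbf{Main obstacle: $(\P_2,F_3)$.} As in \cref{clm:bounded_mw}, we must show that an $F_3$-ball of radius $2r-2$ meets at most one layer of each border block and at most one $r(i,\pi)$ per connector. In $(V,F_3)$, a tile copy at a border block is joined to its own tile path of length $4r-3$ inside the cell. Each path then links through $r(i,\pi)$ to every copy with first component $i$ at both adjacent border blocks. To go from one layer $i$ to another layer $j$ of the same border block, a walk must pass through some $r$-vertex. Since $r(i,\pi)$ has $M_{out}$-pairs only to layer $i$ on each side, the walk has to traverse a whole tile path of the neighbouring cell to reach its other border, where a different connector index may occur. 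This costs at least $2(2r-2)+2>2(2r-2)$ steps. Hence two distinct layers, or two distinct $r(i,\pi)$, are at $F_3$-distance greater than $2(2r-2)$, so a single ball of radius $2r-2$ cannot contain both. The ball then meets at most one part per border block or connector, plus all interior blocks and $a$-pairs of the $\Oh(1)$ cells it reaches, for $\Oh(r)$ parts in total. Checking this distance bound precisely, including paths that turn at the crossing block $B^{h_r}_\pi=B^{v_r}_\pi$, is the delicate step: the crossing shortens paths from a horizontal border to a vertical border to about $2r-2$ on each side, and this is exactly why the radius is $2r-2$ and not larger.
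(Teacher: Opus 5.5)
Your proposal is correct and is essentially the paper's (only sketched) argument: the same four-step merge sequence as in the \textsc{$k$-Independent Set} case, which resolves the same-tile pairs inside cells first, then the pairs to the connector vertices $r(\cdot,\pi)$ and $d(\cdot,\pi)$ with border blocks split into layers at the intermediate step, and finally all edges. One small correction to your key step: a shortest walk between two layers of a border block need not pass through an $r$-vertex (it can leave its own tile cross through its down end and $d(\tau_2,\pi)$), but any such walk must still traverse two distinct tile crosses end-to-end, and this gives exactly the bound $2(2r-2)+2=4r-2$ you use.
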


Using the same arguments as in~\cref{clm:w1-hardness}, it follows that for any functions $f(k) = o(k)$ and $g(k)$, solving \kds on graphs with $\mw_{f(k)} \le g(k)$ is W[1]-hard.
\end{proof}

\end{document}